\newcommand{\R}{\mathbb R}
\newcommand{\Z}{\mathbb{Z}}
\newcommand{\bfe}{{\mathbold e}}
\newcommand{\bff}{{\mathbold f}}
\newcommand{\bfg}{{\mathbold g}}
\newcommand{\bfh}{{\mathbold h}}
\newcommand{\bfn}{{\mathbold n}}
\newcommand{\bfp}{{\mathbold p}}
\newcommand{\bfq}{{\mathbold q}}
\newcommand{\bfr}{{\mathbold r}}
\newcommand{\bfu}{{\mathbold u}}
\newcommand{\bfv}{{\mathbold v}}
\newcommand{\bfx}{{\mathbold x}}
\newcommand{\bfy}{{\mathbold y}}
\newcommand{\bfz}{{\mathbold z}}
\newcommand{\bfA}{{\mathbold A}}
\newcommand{\bfB}{{\mathbold B}}
\newcommand{\bfD}{{\mathbold D}}
\newcommand{\bfE}{{\mathbold E}}
\newcommand{\bfF}{{\mathbold F}}
\newcommand{\bfG}{{\mathbold G}}
\newcommand{\bfI}{{\mathbold I}}
\newcommand{\bfP}{{\mathbold P}}
\newcommand{\bfS}{{\mathbold S}}
\newcommand{\bfX}{{\mathbold X}}
\newcommand{\beq}{\begin{equation}}
\newcommand{\eeq}{\end{equation}}
\newcommand{\beqs}{\begin{eqnarray}}
\newcommand{\eeqs}{\end{eqnarray}}
\newcommand{\beql}{\begin{equation} \label}
\newcommand{\half}{\frac{1}{2}}
\newtheorem{proposition}{Proposition}[section]
\newcommand{\bfzeta}{{\boldsymbol{\zeta}}}
\newcommand{\bftheta}{{\boldsymbol{\theta}}}
\newcommand{\bfzero}{\mathbf{0}}
\newcommand{\grad}{\mathop{\rm grad}\nolimits}
\newcommand{\divergence}{\mathop{\rm div}\nolimits}
\newcommand{\curl}{\mathop{\rm curl}\nolimits}
\DeclareMathOperator{\tr}{tr}
\newcommand{\bbS}{{\mathbb{S}}}
\newcommand*\circled[1]{\tikz[baseline=(char.base)]{
            \node[shape=circle,draw,inner sep=2pt] (char) {#1};}}
\numberwithin{equation}{section}
\title{Atomistic-to-Continuum Multiscale Modeling with Long-Range Electrostatic Interactions in Ionic Solids}
\author{Jason Marshall\footnote{jmarshal@andrew.cmu.edu} \  and Kaushik Dayal\footnote{kaushik@cmu.edu} \\ {\small Carnegie Mellon University}}
\date{\today}
\begin{document}
\pagestyle{fancyplain}
\lhead{\fancyplain{\scriptsize Atomistic-to-Continuum Multiscale Modeling with Long-Range Electrostatic Interactions in Ionic Solids }
{\scriptsize Atomistic-to-Continuum Multiscale Modeling with Long-Range Electrostatic Interactions in Ionic Solids}}
\rhead{\fancyplain{\scriptsize Jason Marshall, Kaushik Dayal}{\scriptsize Jason Marshall, Kaushik Dayal}}
\maketitle

\begin{large}
\begin{center}
\textbf{\mbox{Rodney Hill Anniversary Issue of the Journal of the Mechanics and Physics of Solids}}
\end{center}
\end{large}

\begin{abstract}
We present a multiscale atomistic-to-continuum method for ionic crystals with defects.
Defects often play a central role in ionic and electronic solids, not only to limit reliability, but more importantly to enable the functionalities that make these materials of critical importance.
Examples include solid electrolytes that conduct current through the motion of charged point defects, and complex oxide ferroelectrics that display multifunctionality through the motion of domain wall defects.
Therefore, it is important to understand the structure of defects and their response to electrical and mechanical fields.
A central hurdle, however, is that interactions in ionic solids include both short-range atomic interactions as well as long-range electrostatic interactions.
Existing atomistic-to-continuum multiscale methods, such as the Quasicontinuum method, are applicable only when the atomic interactions are short-range.
In addition, empirical reductions of quantum mechanics to density functional models are unable to capture key phenomena of interest in these materials.

To address this open problem, we develop a multiscale atomistic method to coarse-grain the long-range electrical interactions in ionic crystals with defects.
In these settings, the charge density is rapidly varying, but in an almost-periodic manner.
The key idea is to use the polarization density field as a multiscale mediator that enables efficient coarse-graining by exploiting the almost-periodic nature of the variation.
In regions far from the defect, where the crystal is close-to-perfect, the polarization field serves as a proxy that enables us to avoid accounting for the details of the charge variation.
We combine this approach for long-range electrostatics with the standard Quasicontinuum method for short-range interactions to achieve an efficient multiscale atomistic-to-continuum method.
As a side note, we examine an important issue that is critical to our method: namely, the dependence of the computed polarization field on the choice of unit cell.
Potentially, this is fatal to our coarse-graining scheme; however, we show that consistently accounting for boundary charges leaves the continuum electrostatic fields invariant to choice of unit cell.

{\bf Keywords:} electromechanics, multiscale modeling, atomistics, long-range interactions, Quasicontinuum method

\end{abstract}

\tableofcontents


\section{Introduction}

Ionic crystals such as solid electrolytes and complex oxides are central to modern technologies for energy storage, sensing, actuation, and other functional applications.
Atomic-scale defects often play a central role in these materials, not only to limit reliability, but more importantly to enable the functionalities that make these materials of critical importance.
E.g., in solid electrolytes, conduction is mediated by charged point defects \cite{solid-electrolytes}; and in complex oxide ferroelectrics, functionality is mediated by planar domain wall defects, and loss of functionality often occurs when domain walls are ``pinned'' by charged oxygen vacancy point defects \cite{ferroelectrics-scott-book}.
A fundamental understanding of these materials therefore requires an accounting of the atomic-level structure of the defects.
This poses a multiscale problem: atomic-level resolution is required at the defect, while complex geometries and boundary conditions require the modeling of a large specimen.

While defects play a critical role in determining properties, they occupy a tiny volume of the lattice; an exceedingly large fraction of the crystal is close-to-perfect.
This feature is exploited in most leading atomistic multiscale methods such as the Quasicontinuum (QC) method \cite{QC-review1,QC-review2}: typically, an adaptive coarse-graining is used with atomic resolution in the vicinity of the defect and a coarse-grained description further away as the crystal tends to a perfect lattice.
Another important aspect of the coarse-graining is the use of sampling or quadrature to efficiently evaluate the energy in the coarse-grained region.
It is essential that the atomic interactions are short-range to allow the evaluation of the energy at the quadrature points to be efficient.
Therefore, existing multiscale methods cannot handle long-range electrostatic/ionic interactions that decay as $1/r$, where $r$ is the separation between charges.

A symptom of this difficulty with electrostatic interactions can be observed in standard proofs of the Cauchy-Born (CB) theorem that require the interactions to decay faster than $1/r^3$ \cite{friesecke-james, blanc-lebris-lions}.
Roughly, this implies that the standard CB theorem requires that the charge distribution in each unit cell of the lattice should not have net charge or net dipole character, but only higher-order multipoles.
As shown in \cite{james-muller}, it is not possible to define a meaningful energy density $W(\cdot)$ in such a setting, i.e., the standard decomposition of the energy used in elasticity
\[
 	E = \int_\Omega W(\cdot) \ d\Omega - \int_{\partial\Omega} \text{boundary working}
\]
is not valid.
When long-range electrostatic forces are involved, $W(\cdot)$ does not depend solely on the local value of a field (the strain field in elasticity or the polarization field in electrostatics).
Rather, it depends on the electrostatic fields in a nonlocal manner as well as boundary conditions.
While not always stated explicitly, some notion of the CB theorem is inherent in most atomistic multiscale formulations.

The restrictions described above on the nature of atomic-level interactions for conventional multiscale methods exclude an extremely large class of materials; essentially, all dielectrics, polarizable solids, and ionic solids.
In dielectrics and polarizable solids, the non-vanishing dipole moment in the unit cell is central to the physics of dielectric response, spontaneous polarization.
In ionic solids such as ionic conductors, the existence of charged defects is central to enabling conduction.
Therefore, it is essential to develop methods that can handle long-range electrostatic interactions.

In this paper, we present a multiscale method that is tailored to allow both short-range atomic interactions as well as long-range electrostatic interactions.
Some key features of these interactions are as follows.
The short-range atomic interactions can be highly nonlinear and involve complex multibody interactions; however, they are typically restricted to 2nd or 3rd nearest-neighbors in a lattice.
The long-range electrostatic interactions have the opposite features: the interactions between charges are entirely pairwise, but the interactions between {\em every} pair of charges in the system can be non-negligible.
A further important feature is that the short-range and electrostatic contributions to the total energy combine additively.
These features enable us to leverage much of the existing work for short-range interactions.
In particular, we can use a standard version of the Quasicontinuum method \cite{Tadmor} for the short-range interactions in combination with a method that we develop for efficiently computing the electrostatic interactions.

While the presentation of our method in this paper is largely formal, key aspects build on -- and are supported by -- rigorous results of James and M\"{u}ller  \cite{james-muller} and others following their work, e.g. \cite{schlomerkemper-schmidt}.
We note that seminal formal results in this topic were earlier obtained by \cite{toupin-elastic-dielectric}.
While these works deal with point dipoles arranged in a lattice, our focus is on charges; however, due to the fact that the net charge in each unit cell of the lattice is $0$, many of the key results carry over largely unchanged, as also noticed previously by \cite{Xiao_thesis, puri-bhatta}.
The central idea that we exploit is that a charged lattice can be coarse-grained by introducing a polarization density field.
I.e., electrostatic quantities that in principle require the solution of a Poisson problem with a rapidly and almost-periodically oscillating forcing term due to charge density can instead be computed with a much smoother forcing that is related to the polarization density field.

Efficient and accurate methods for interactions in charged systems have a long history in numerical methods.
The key challenge is the long-range nature of interactions: in principle, a system of $N$ point charges requires $O(N^2)$ calculations.
For $N$ of the order required for typical problems of interest today, this is completely infeasible.
However, the seminal and beautiful Fast Multipole Method (FMM) of Greengard and Rokhlin \cite{greengard1987} provided a breakthrough in enabling this in $O(N)$ calculations with a controlled error.
A key strength of the FMM is that the charge distribution can be completely arbitrary and non-uniform; however, this generality also means that the method does not exploit the structure in a given problem.
As noted above, in the problems of relevance here, the crystalline structure is lost in the vicinity of the defect, but large parts of the crystal are almost perfect.
The FMM, however, is unable to exploit this structure, whereas our method is more tailored (and thereby also less general) and appears to scale almost independent of $N$ asymptotically.
Another leading method for atomic-level calculations with electrostatic interactions is the Ewald method (described in e.g. \cite{Tuckerman-MD-book}).
It is restricted to periodic settings and therefore inapplicable to multiscale calculations with defects and complex geometries and boundary conditions.

As mentioned above, our coarse-graining of electrostatic interactions is based on the notion of a polarization density field.
However, it is well-known, e.g. \cite{Resta-Vanderbilt}, that the polarization density of a periodic solid depends on the choice of unit cell.
At first sight, this is a disturbing observation and much work has been done in the materials physics community on using quantum mechanical notions such as the Berry phase to obtain a unique choice of polarization for a given periodic charge distribution \cite{Resta-Vanderbilt}.
However, an important aspect of that approach is the insistence on starting from an infinite periodic solid.
In both the formal calculations presented here, and the related rigorous calculations in the references above, the starting point is a finite periodic solid whose limit behavior is studied.
From this ``real-space'' perspective, the boundaries of the crystal lattice enter naturally into the problem, in sharp contrast to starting from the infinite periodic solid where boundaries are ill-defined.
The critical importance of the boundaries is that they, roughly speaking, compensate for the choice of unit cell.
I.e., while different unit cell choices lead to different expressions for the polarization density, these also lead to different bound surface charges on the boundaries.
When {\em both} the bulk bound charge and the surface bound charge are consistently accounted for in the calculations, the electric field and other quantities of relevance to the energy do not depend on the choice of unit cell up to an error that scales with size of the lattice and vanishes in the limit.
Therefore, we take the view that a unique choice of unit cell to compute polarization density is unnecessary and any choice of unit cell is -- in principle -- equally valid.
I.e., the polarization is an intermediate coarse-grained quantity, but there is no fundamental physical reason to have a specific choice.
In practice, notions of crystal symmetry typically are most useful in selecting a unit cell.
Heuristically, this perspective is comparable to the universally-accepted view in continuum mechanics that any reference configuration is -- in principle -- equally valid.
While certain choices of reference configuration can lead to conceptual and algebraic simplifications, there is no fundamental physical preference for any specific choice.
What is more relevant is that it is possible to go between different choices with appropriate transformations to the kinematic variables and the energy densities.

The paper is organized as follows.
\begin{itemize}

	\item In Section \ref{sec:formulation}, we formulate the problem at the atomic level and briefly describe the well-developed QC approach to handle short-range interactions.

	\item In Section \ref{sec:electrostatics}, we describe our treatment of the long-range interactions.  Formally, we show the appearance of the polarization as an intermediary multiscale quantity to link atomistic charge distributions with coarse-grained fields.  We also examine the issue of the choice of unit cell for polarization;  accounting consistently for the boundaries for a given unit cell does not affect the coarse-graining.

	\item In Section \ref{sec:implementation}, we outline the kinematic coarse-graining that follows the complex local QC method \cite{Tadmor} and other aspects of the numerical implementation.

	\item In Section \ref{sec:examples}, we outline the model material and its response to a variety of electrical and mechanical loadings.

	\item In Section \ref{sec:conclusion}, we discuss various aspects of the work including open problems for ongoing and future work.

\end{itemize}


\subsection{Notation}
\label{sec:notation}

Throughout the paper, bold lowercase and uppercase letters denote vectors and tensors.
The summation convention is {\em not} used in this paper.
Sums will be explicitly written out to avoid confusion except where stated.

We define $L$ as a Bravais lattice with three independent lattice vectors that make up a unit cell.
\begin{equation}
	\label{eq:L_def}
	L(\bfe_i,\bfzero) = \left( \bfx \in \R^3 \text{, }\bfx=\sum_i \nu^i\bfe_i \text{ where }\nu^i \in \Z \text{, }i=1,2,3\right)
\end{equation} 

\begin{tabular}{ l c l }
$E_{total}$  & = & Total energy\\
$E$ & = & Electrostatic energy\\
$U$ & = & Interatomic potential energy (i.e. Lennard-Jones, Buckingham types)\\
$W$ & = & Short-range strain energy density (related to $U$)\\
$\Omega$ & = & Continuum body in the current configuration \\
$\Omega_0$ & = & Continuum body in the reference configuration \\
$Q^s$ & = & Charge of the atomic species indexed by $s$\\
$\rho$ & = & Charge density field in $\Omega$\\
$\epsilon_0$ & = & dielectric constant for vacuum \\
$\mathbb{K}$ & = & Dipole-dipole interaction electrostatic kernel\\

$\bfu$ & = & Displacement field\\
$\bfx$ & = & Slow variable representing position in current configuration\\
$\bfy$ & = & Fast variable representing position in current configuration\\
$\bfx_0$ & = & Position in reference configuration\\

$\bfzeta^s$ & = & Intra-unit cell position of species $s$, defined in the reference \\
$\bfp$ & = & Polarization density field\\
$\phi$ & = & Electric potential\\
$\grad_{\bfx} \phi$ & = & Electric field\\
$\bfF$ & = & Deformation gradient\\
$J$ & = & $\det \bfF$ \\
$\square_i$& = & $i$th unit cell\\
$\triangle_i$ & = & $i$th partial unit cell\\
$\epsilon$ & = & Continuum material point lengthscale\\
$l$ & = & Atomic lengthscale\\
$L$ & = & Lengthscale over which continuum fields vary\\
$\mathcal{B}_{\epsilon}$ & = & Ball of radius $\epsilon$\\
$\mathcal{D}_{\epsilon}$ & = & 2D disk of radius $\epsilon$\\
$\grad_{\bfx_0}$ & = & Gradient with respect to the reference configuration\\
$\grad_{\bfx}$ & = & Gradient with respect to the current configuration\\
$\sigma$ & = & Surface charge due to non-neutral partial unit cells.
\end{tabular}

$\Omega_{\#}$ and $\Omega_{\square}$ represent the decomposition of $\Omega$ into the partial unit cells on the boundary ($\Omega_{\#}$), and the remainder $\Omega_{\square} := \Omega \backslash \Omega_{\#}$; see Fig. \ref{fig:omega-decomposition}.


\section{Problem Formulation}
\label{sec:formulation}

We consider a crystal occupying a region $\Omega$, composed of charged species indexed by $s$, each carrying a fixed charge $Q^s$.
The notation of species is used broadly; it refers to ions and electrons, as well as electron ``shells'' as used in core-shell models \cite{core-shell-ref}.
We assume that the charges are all point charges, e.g. nuclei, or that they can be represented through a center of charge as in electron shells.
Therefore, the charge distribution $\rho(\bfx)$ is a collection of Dirac masses.

The total energy in the body can be written in the form below:
\begin{equation}
\label{eq:energy_lattice}
	E_{total} =
		\underbrace{ \sum_{\substack{i \\ i \neq j}} U_i \left(\{\bfr_{ij} \} \right)}_\text{short-range}
		+
		\underbrace{\half \sum_{\substack{i,j \\ i \neq j}} \frac{Q^s_i Q^s_j}{4 \pi \epsilon_0\vert \bfr_{ij}\vert}}_\text{long-range}
\end{equation}
$\bfr_{ij}$ is the vector between charges $i$ and $j$, and $Q^s_i$ is the charge carried by $i$th atom of species $s$.
The function $U$ is the given short-range interatomic potential and can typically involve multibody interactions.

We restrict our attention to zero temperature.
Our goal is to find local minimizers of $E_{total}$ to obtain the equilibrium structure subject to applied mechanical and electrostatic loadings.
Brute force minimization is infeasible even for the short-range contributions for realistically large systems.
This motivated the QC method and related approaches \cite{QC-review1,QC-review2} for short-range interactions.
We will use the so-called local QC multi-lattice method for the short-range energy largely following \cite{Tadmor}.
As noted above, there are two ingredients to this multiscale approach: first, a kinematic condensation of the degrees of freedom using interpolations, and second, efficient calculation of the energy sum by using sampling or quadrature in relatively uniform regions.
The term {\em local} refers to the fact that we use a sampling approximation {\em everywhere} in the specimen including at the defect core where it is likely to be quite inaccurate.

We begin with the species in the reference configuration arranged in a periodic multi-lattice.
The unit cell is denoted $\square$ and the atomic length scale $l$ (Figure \ref{fig:unit_cell_omega}).
In a perfect lattice with short-range interactions, the energy converges to $\int_{\Omega} W(\grad_{\bfx_0} \bfu,\bfzeta^s) \ dV_{\bfx_0}$.
Here, $W$ is the strain energy density, and  $\grad_{\bfx_0} \bfu$ and $\bfzeta^s$ are the deformation gradient and the ``shifts'' or relative displacements between lattices \cite{blanc-lebris-lions,friesecke-james}.
While the expression for $W$ is algebraically involved, it is conceptually simple and comes directly from $U$.
In a perfect multi-lattice, there is a well-defined notion of energy per atom since every atom of a given species is in the same environment.
Therefore, it is possible to define an energy density by finding the energy of the atoms in a unit cell and dividing the cell volume, which is precisely $W$.
The energy naturally depends on the shape of the unit cell and the positions of the different species within it, and this information is contained in $\grad_{\bfx_0} \bfu$ and $\bfzeta^s$ respectively.
The QC method replaces the sum in (\ref{eq:energy_lattice}) by sampling the energy density $W(\grad_{\bfx_0} \bfu,\bfzeta^s)$ and using appropriate weights.
In the more sophisticated formulations of QC, the energy is computed without this approximation in highly-distorted regions such as the vicinity of the defect \cite{knap-ortiz}.
In the local QC, the approximation is used throughout the specimen, including at the defect.
For further details, we refer the reader to recent reviews of the extensive literature on applying QC to materials with short-range interactions \cite{QC-review1,QC-review2}.

\begin{figure}[ht!]
	\centering
	\includegraphics[width=125mm]{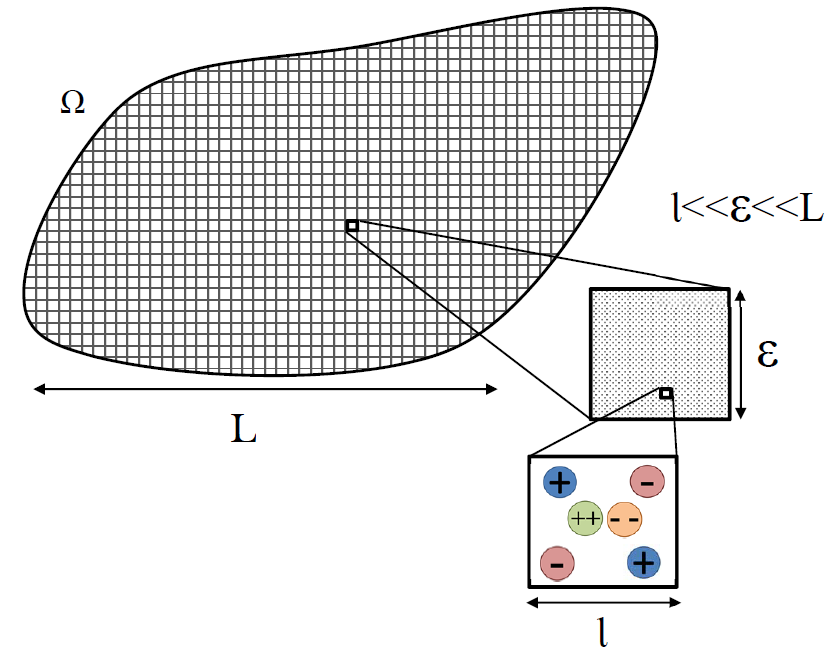}
	\caption{\small Domain $\Omega$ showing the separation of scales with sample charge distribution.}
	\label{fig:unit_cell_omega}
\end{figure}


\section{Electrostatic Interactions}
\label{sec:electrostatics}

In this section, we describe the formal calculations that enable us to efficiently account for the long-range electrostatic interactions.
A key aspect is the appearance of the polarization density as a multiscale mediator.
Because our treatment here is formal, we go between charge density fields and point charges as convenient by assuming that our calculations are valid even when the charge density field consists of Dirac masses.
Rigorous treatments of many of the key aspects are available in the literature \cite{james-muller} and also are the focus of our ongoing work.


\subsection{Why the Electrostatic Energy is Long-Range}
\label{toy-example}

We construct and examine some simple examples to understand why the electrostatics is denoted ``long-range''.
E.g., the Lennard-Jones potential has interactions that decay as $r^{-6}$ and these interactions nominally extend to $\infty$.
As we see, there are important differences when the interactions decay as $r^{-1}$ in electrostatics.

Consider a uniform lattice and 3 cases of charge arrangement within in each unit cell: (i) a charge, (ii) a pair of equal and opposite charges forming a dipole, and (iii) two pairs of equal and opposite charges that form a quadrupole with zero dipole moment (Fig. \ref{fig:toy-model}).

\begin{figure}[ht!]
	\centering
	\includegraphics[width=170mm]{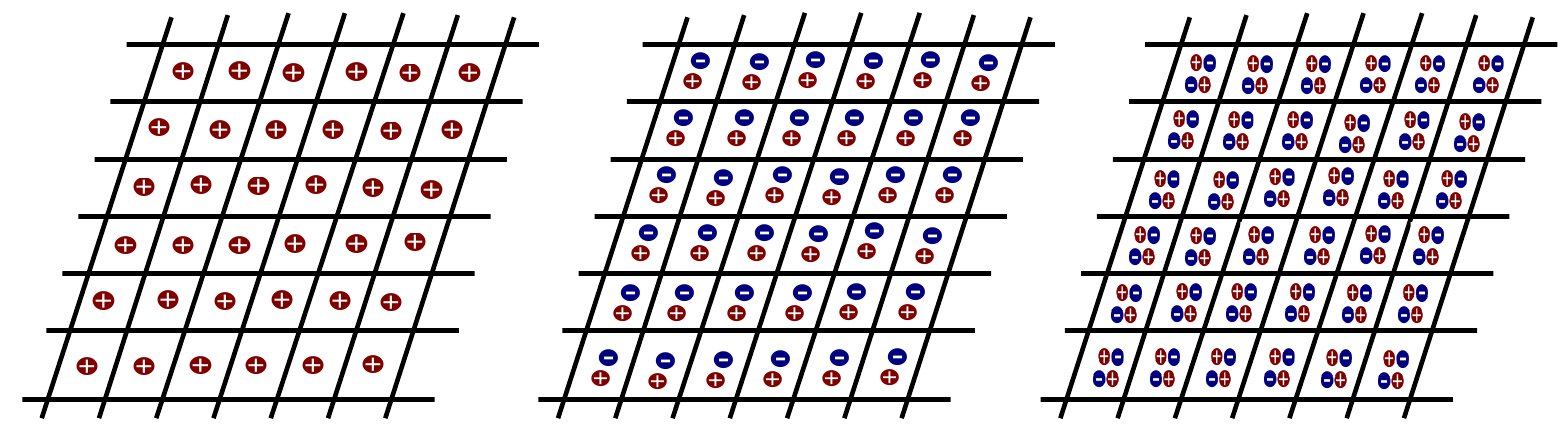}
	\caption{\small 3 cases of charge arrangement: (i) net charge, (ii) net dipole but no net charge, and (iii) net quadrupole, but no net charge and no net dipole.}
	\label{fig:toy-model}
\end{figure}

We first consider the lattice of charges.
As a rough measure of energy density, we compute the energy of the charge in the chosen unit cell due to its interaction with all the other unit cells in the body.
This is the product between the magnitude of the charge in the chosen unit cell with the electrostatic potential created by the rest of the body.
The potential due to a charge at a distance $r$ from the chosen unit cell scales as $r^{-1}$.
Further, at a distance $r$ from the chosen unit cell, we consider a spatial region with the shape of a spherical shell with unit thickness.
This shell has volume that scales as $r^2$ and therefore roughly contains $r^2$ charges.
So the total potential at the chosen unit cell due to the rest of the system is $\sum_{r=1}^{\infty} \frac{1}{r} r^2 = \sum_{r=1}^{\infty} r \rightarrow \infty$.
That is, the energy density of the body is unbounded in the large-body limit.
The physical implication of this calculation is that large clusters of unbalanced charges have extremely high-energy and are thus unlikely to be observed in real materials.

Next consider the lattice of dipoles.
As a rough measure of energy density, we compute the energy of the dipole in the chosen unit cell due to its interaction with all the other unit cells in the body.
This is the product between the magnitude of the dipole in the chosen unit cell with the electrostatic field created by the rest of the body.
The electric field due to a dipole at a distance $r$ from the chosen unit cell scales as $r^{-3}$.
Further, the shell at a distance $r$ again contains roughly $r^2$ dipoles.
So the total electric field at the chosen unit cell due to the rest of the body is $\sum_{r=1}^{\infty} \frac{1}{r^3} r^2 = \sum_{r=1}^{\infty} \frac{1}{r}$.
This sum nominally also tends to infinity.
However, the issue is more subtle.
The full expression for the electric field due to a unit dipole oriented in the direction $\hat{\bfn}$ at a position $\bfx$ is $E_i = \sum_j\frac{\delta_{ij} - 3 \hat{\bfx}_i \hat{\bfx}_j}{4\pi |\bfx|^3} \hat{n}_j$.
Certain components of the summation have alternating sign, and this leads to conditionally convergent sums, and in general this sum is at the border of convergence / divergence.
The physical implication of this calculation is that the energy density of a large collection of dipoles is extremely sensitive to the precise boundary conditions that are imposed far away ``at infinity''.
Alternatively, in a finite body, this says that the energy {\em density} at a given point is extremely sensitive to the distribution of dipoles throughout the entire body.
This physical implication is the reason to denote electrostatic interactions as ``long-range'', namely, it is not possible to define a meaningful energy density that depends only on the local state of the crystal.
The energy density at a given material point instead requires accounting for the state of the body at every other material point as well as boundary conditions.
As we see below, this also poses practical difficulties for standard multiscale algorithms such as QC.
This makes these methods inapplicable to an extremely broad class of solids: in all dielectrics, polarizable media, and ionic solids, the non-vanishing dipole moment in the unit cell is central to the physics of dielectric response, spontaneous polarization, and other key electrical properties.

Finally, consider the lattice of quadrupoles.
As a rough measure of energy density, we compute the energy of the quadrupole in the chosen unit cell due to its interaction with all the other unit cells in the body.
This is the product between the magnitude of the quadrupole in the chosen unit cell with the gradient of the electrostatic field created by the rest of the body.
The electric field due to a quadrupole at a distance $r$ from the chosen unit cell scales as $r^{-5}$.
Further, the shell at a distance $r$ contains again roughly contains $r^2$ quadrupoles.
So the total electric field at the chosen unit cell due to the rest of the body is $\sum_{r=1}^{\infty} \frac{1}{r^5} r^2 = \sum_{r=1}^{\infty} \frac{1}{r^3}$.
This sum converges rapidly.
This setting corresponds to the case of metals and other systems with mobile electrons that allow the charge to redistribute itself to ``shield'' the dipole moment.
This leads effectively to short-range interactions: though nominally the interactions are present at all values of $r$, the rapid convergence of the series allows truncation at finite cut-off without significant error.
For this reason, among others, short-range potentials with interactions involving only nearest- and next-nearest-neighbors are sufficiently accurate to model metallic and related systems.

\subsection{Existing Numerical Approaches to Compute Electrostatic Interactions}

The long-range nature of the electrical interactions described above leads to practical hurdles in atomic multiscale computations.
Leading methods to handle these interactions are Ewald sums and the Fast Multipole Method (FMM).
The Ewald method \cite{griebel-MD-book} assumes perfect periodicity.
This is appropriate only for perfect crystals.
Approximating defect calculations by periodic supercells has severe artifacts even with purely short-range interactions, a difficulty much more pronounced when interactions are long-range in nature.
The fast multipole method (FMM) reduces the problem from $O(N^2)$ to $O(N)$, but is still extremely expensive with atomic multiscale calculations in crystals often as large as $N\sim 10^{21}$.
In addition, the ability of FMM to deal with arbitrary charge distributions also implies that it does not exploit the close-to-uniform distortion away from the defect \cite{beatson-greengard-FMM-review}.

For short-range interactions, multiscale atomistic methods such as the QC method borrow ideas of quadrature rules from FEM \cite{QC-review1,QC-review2} to evaluate the energy at various sampling/quadrature atoms and then use quadrature weights.
This idea depends critically on the energy evaluation at the quadrature point being a fast calculation.
This is {\em not} a fast calculation if the quadrature charges interact directly with every other charge in the system.
Therefore, these multiscale methods are applicable only to materials with short-range interactions.
Multiscale QC-based methods for Orbital-Free Density Functional Theory -- an empirical simplification of Density Functional Theory for metallic systems -- use a continuous charge density rather than discrete point charges, but formally the issues are the same.
Roughly, a {\em predictor solution} is patched together from the periodic solution in each ``element'', and then a {\em corrector solution} due to the defect is superposed. The efficiency and accuracy of this approach requires that the corrector solution can be coarsely resolved away from the defect \cite{gavini-ortiz-bhatta}.
However, in general there is a spatially-varying dipole moment in the specimen and zero dipole in the free space; therefore, the periodic calculation in any element will have large errors because it replaces this complex environment by a charge distribution with uniform dipole density.
Consequently, the corrector can require fine resolution over much of the domain, except in settings such as metallic systems with net zero local dipole where they are currently applied.

An alternate approach to accounting for the large number of charge-charge interactions is to rewrite the problem as the electrostatic Poisson equation.
However, this will lead to a highly-oscillatory forcing term that fluctuates at the atomic lengthscale while the problem is posed over the entire specimen.
Therefore, this does not solve the essential difficulty.
While numerical homogenization approaches may be feasible because the forcing close-to-periodic in many regions of the sample, it is not clear how to obtain full resolution in the vicinity of the defect.
Further, the Poisson equation can be thought of as a nonlocal constraint that must be appended to (\ref{eq:energy_lattice}) in the minimization and therefore the essential non-local character remains.

\subsection{Coarse-Graining the Electrostatic Field Energy}
\label{sec:energy_derivation}

We now consider the coarse-graining of the electrostatic energy.
As noted above, we will work with a charge density field $\rho$ and assume that the coarse-graining is also valid for point charges by replacing $\rho$ with appropriate Dirac masses.
Our starting point is to write $\rho$ following the ideas of 2-scale methods \cite{allaire-2scale,debotton-bhattacharya}.
I.e., we consider the setting where the charge density varies over 2 different lengthscales.
There is a rapid almost-periodic variation of charge density at the lengthscale of the atomic unit cell (denoted $l$).
In addition, there is a much slower variation over the characteristic continuum lengthscale denoted $L$.
In the language of 2-scale methods, we can write the charge density field as $\rho(\bfx,\bfy)$ with $\bfy:=\bfx / l$ and $\rho$ periodic (with period of order one) in the second argument.
A heuristic picture is that $\bfx$ specifies the location of the material point, and $\bfy$ specifies the location within the material point (Fig. \ref{fig:unit_cell_omega}).

The electrostatic field energy can be written:
\begin{equation}
\label{eqn:full-field-energy}
	E
		= \int_{\bfx,\bfx' \in\Omega} \frac{\rho(\bfx) \rho(\bfx')}{\left| \bfx - \bfx' \right|} dV_{\bfx} dV_{\bfx'}
\end{equation}
We wish to examine the limit of the energy in the following setting.
We introduce a lengthscale $\epsilon$ that, roughly speaking, denotes the size of the continuum material point.
The limit of interest is then $l / \epsilon \rightarrow 0$ and $\epsilon / L \rightarrow 0$, or $l \ll \epsilon \ll L$.
Essentially, the physical interpretation of this limit is that the atomic unit cell is much smaller than a material point, and a material point is much smaller than the lengthscale over which continuum fields vary.

We can now rewrite $E$ as
\begin{equation}
\label{eqn:full-field-energy-2}
	E
		= \sum_{\bfx,\bfx' \in \Omega} \int_{(l \bfy) \in \mathcal{B}_\epsilon(\bfx), (l \bfy') \in\mathcal{B}_\epsilon(\bfx')} \frac{\rho(\bfx,\bfy) \rho(\bfx',\bfy')}{\left| \bfx + l \bfy - \bfx' - l\bfy' \right|} (l^3 \ dV_{\bfy}) (l^3 \ dV_{\bfy'})
\end{equation}
The notation $\mathcal{B}_\epsilon(\bfx)$ denotes a ball of radius $\epsilon$ centered at $\bfx$.
We note that $(l \bfy) \in \mathcal{B}_\epsilon(\bfx)$ implies $\bfy \in \mathcal{B}_{\epsilon/l}(\bfx)$.

We break up $E$ into 2 parts: a local term when $\bfx=\bfx'$, and a nonlocal term when $\bfx \neq \bfx'$:
\begin{IEEEeqnarray}{rCl}
\label{eqn:full-field-energy-3}
	E
	& = &
	\underbrace{
		\sum_{\bfx \in \Omega} \int_{\bfy, \bfy' \in\mathcal{B}_{\epsilon/l}(\bfx)} \frac{\rho(\bfx,\bfy) \rho(\bfx,\bfy')}{l \left| \bfy - \bfy' \right|} l^6 \ dV_{\bfy} dV_{\bfy'}
	}_{\text{local term: } \bfx = \bfx'} \nonumber \\
	&& +
	\underbrace{
		\sum_{\bfx,\bfx' \in \Omega; \bfx \neq \bfx'} \int_{\bfy\in\mathcal{B}_{\epsilon/l}(\bfx), \bfy' \in\mathcal{B}_{\epsilon/l}(\bfx')} \frac{\rho(\bfx,\bfy) \rho(\bfx',\bfy')}{\left| \bfx + l \bfy - \bfx' - l\bfy' \right|} l^6 \ dV_{\bfy} dV_{\bfy'}
	}_{\text{nonlocal term: } \bfx \neq \bfx'}
\end{IEEEeqnarray}
In the limit that we will take, the nonlocal term represents the interactions between charges that are located at different material points $\bfx, \bfx'$, while the local term represents interactions between charges at the same material point.

We introduce some notation for what follows.
We denote by $\square$ the rescaled atomic unit cell with characteristic dimension and volume of order $1$.
The atomic unit cell with characteristic dimensions $l$ is denoted by $l\square$.
We also use $\square_i$ and $l\square_i$ to denote the $i$-th atomic unit cell in a lattice.


\subsubsection{The Local Contribution of the Electrostatic Energy}

For a fixed $\bfx$, the charge $\rho(\bfx,\bfy)$ is periodic in the second argument over $\square$.
Therefore, we begin by rewriting the local term in (\ref{eqn:full-field-energy-3}) in terms of integrals over $\square_i$:
\begin{equation}
\label{eqn:local-energy-1}
	\sum_{\bfx \in \Omega} \quad \sum_{\square_i, \square_{i'} \in \mathcal{B}_{\epsilon/l}(\bfx)} \int_{\bfy \in \square_i, \bfy' \in \square_{i'}} l^5 \frac{\rho(\bfx,\bfy) \rho(\bfx,\bfy')}{\left| \bfy - \bfy' \right|} dV_{\bfy} dV_{\bfy'}
\end{equation}
The periodicity, and the fact that $\epsilon / l \rightarrow \infty$, together imply that every term in the sum relating the interaction between cells $i$ and $i'$ can be mapped to an interaction between cells $0$ and some $i''$.
Therefore, the local term can now be written
\begin{equation}
\label{eqn:local-energy-2}
		\sum_{\bfx \in \Omega} \left( \frac{\epsilon}{l}\right)^3 \int_{\bfy \in \square_0, \bfy' \in \mathcal{B}_{\epsilon/l}(\bfx)} l^5 \frac{\rho(\bfx,\bfy) \rho(\bfx,\bfy')}{ \left| \bfy - \bfy' \right|} dV_{\bfy} dV_{\bfy'}
\end{equation}
The factor $(\epsilon / l)^3$ is the number of terms in the sum that are replaced, obtained from dividing the volume of the ball of radius $\epsilon / l$ by the volume of $\square$.

This has the form of a Riemann sum: with $\epsilon \ll L$, the term $\epsilon^3$ is the volume measure.
\begin{equation}
\label{eqn:local-energy-3}
		\sum_{\bfx \in \Omega} \epsilon^3 \left(\int_{\bfy \in \square_0, \bfy' \in \mathcal{B}_{\epsilon/l}(\bfx)} l^2 \frac{\rho(\bfx,\bfy) \rho(\bfx,\bfy')}{\left| \bfy - \bfy' \right|} dV_{\bfy} dV_{\bfy'}\right)
\end{equation}
The term in the brackets is the integrand and must be well-behaved, i.e. neither blow up nor go to $0$, in the limit $l \ll \epsilon$.
The natural scaling is that the charge density must scale as $\rho(\bfx, \bfy) = \tilde{\rho}(\bfX,\bfy) / l$ where $\tilde{\rho}$ is the charge density on the rescaled unit cell $\square$ with characteristic dimension $1$, and $l \bfX = \bfx$.
Note that $\tilde{\rho}$ has dimensions of charge per unit area.
While this choice of scaling may appear arbitrary, we note that it can be recognized as the classical dipole scaling from elementary electrostatics.
That is, in constructing the notion of a point dipole, one starts with charges that are separated by a finite distance and then takes the limit of the charges approaching each other.
However, this limit leads to a finite dipole moment only when the charge magnitude is assumed to scale inversely with separation, thereby leaving the product of charge and separation distance finite.
It is precisely this scaling which is required here for a finite local electrostatic energy.
In our setting, if, for example, we assumed a fixed charge density and allowed the lattice spacing to go to $0$, charge neutrality would give us vanishing energy.

Using the charge scaling described above enables us to map the calculation of the integrand to a unit domain and gives the final form:
\begin{equation}
\label{eqn:local-energy-4}
	\int_{\bfx \in \Omega} \left(\int_{\bfy \in \square_0, \bfy' \in \mathcal{B}_{\epsilon/l}(\bfx)} \frac{\tilde{\rho}(\bfX,\bfy) \tilde{\rho}(\bfX,\bfy')}{\left| \bfy - \bfy' \right|} dV_{\bfy} dV_{\bfy'}\right) \ dV_{\bfx}
\end{equation}
The energy in this form can be readily absorbed into standard energy densities that arise from applying the CB theorem to short-range interactions.
This term has a number of different names: the Madelung energy in ionic solids \cite{kittel-book}, the Lorentz local field, the weak-short contribution \cite{james-muller}.

As an example, we replace the charge density with a set of Dirac masses representing point charges.
The charge density in the unit cell is $\tilde{\rho}(\bfX,\bfy)= \sum_{\bfy\in\square_0} Q^s \delta_{\bfy_s}(\bfy)$ and extended periodically.
The local energy has the form:
\begin{equation}
	\label{eq:local-energy-5}
	E_{local}  =  \int_{\Omega} \left(\sum_{i,j \in \square_0} Q^i \bbS^{ij}Q^j + \frac{1}{3}\vert \bfp(\bfx)\vert^2 +\bfp(\bfx)\cdot \bfS \bfp(\bfx) \right) \ dV_{\bfx}
\end{equation}
where $\bfp(\bfx) := \sum_{\bfy\in\square_0} Q^s \bfy \delta_{\bfy_s}(\bfy)$ is the polarization of the unit cell.

The quantities $\bbS, \bfS$ are defined as:
\begin{equation}
	\label{eq:SS_def}
	\bbS^{ij} := \lim_{\omega \to \infty}\sum_{\bfy^i \in \square}\sum_{\substack{\bfz \in L_1 \setminus \bfy^i \\\cap \mathcal{B}_\omega}} \frac{1}{4 \pi \epsilon_0 \vert \bfy^i-\bfz^j \vert}
	, \quad
	\bfS := \lim_{\omega \to \infty}\sum_{\substack{\bfz \in L_1 \setminus \bfzero \\\cap \mathcal{B}_\omega}} \mathbb{K} (\bfz)
\end{equation}
In the specific case that we have only 2 point charges in a unit cell, $\bbS$ vanishes and the local energy can be written in terms of $\bfp$ exclusively.
The dipole kernel $\mathbb{K}$ is defined in (\ref{eqn:multipole-expansion}).

An important point above is the presence of the limit in the definitions of $\bbS$ and $\bfS$.
As noted previously, the full sums used above are conditionally convergent.
The use of a limit is equivalent to enforcing a particular order of summation; in this case, it corresponds to using ``neutral spheres'' using the terminology of Ewald summation.
Physically, it enforces that the far-field boundary conditions are set to $0$.
The local contribution then is simply the energy of a uniform lattice of charges with vanishing far-field electric field; the lattice is uniform because the entire lattice is located at a single material point.
In general, there can be a non-vanishing far-field electric field due to the other material points and continuum-scale boundary conditions, and this is introduced through the non-local contribution in the next section.


\subsubsection{Nonlocal Contribution of the Electrostatic Energy}

We now focus on the nonlocal term in (\ref{eqn:full-field-energy-3}), i.e., the interactions between charges at different material points.
This contribution provides an energy that is very different from the standard local continuum energies.
In particular, those energies are developed from the CB theorem that in the limit does not have any direct atomic interactions between different material points.
Here, we have a clear nonlocal character to the energy.

We first introduce some notation regarding the multipole expansion.
Consider the electrostatic interaction for charges located at $\bfx + l \bfy$ and $\bfx' + l\bfy'$:
\begin{equation}
\label{eqn:multipole-expansion}
\begin{split}
 	\frac{1}{\left| \bfx + l\bfy - \bfx' - l\bfy' \right|}
	=
	& \frac{1}{\left| \bfx - \bfx' \right|}
	+
	\frac{\partial }{\partial \left(\bfx - \bfx' \right)} \left(\frac{1}{\left| \bfx - \bfx' \right|}\right) l \cdot \left( \bfy - \bfy' \right)
	\\
	&
	+
	\half
	\underbrace{
		\frac{\partial^2 }{\partial \left(\bfx - \bfx' \right)^2} \left(\frac{1}{\left| \bfx - \bfx' \right|}\right)
	}_{\mathbb{K}} l^2 : \left( \bfy - \bfy' \right) \otimes \left( \bfy - \bfy' \right)
	+
	O(l^3)
\end{split}
\end{equation}
The operator $\mathbb{K}$ is the dipole kernel.

In the nonlocal term in (\ref{eqn:full-field-energy-3}), in anticipation of using the periodicity of $\rho(\bfx,\bfy)$ in $\bfy$ when $\bfx$ is held fixed, we reduce the integrations to unit cells $\square$:
\begin{equation}
\label{eq:nonlocal-energy-1}
	\sum_{\bfx,\bfx' \in \Omega; \bfx \neq \bfx'}
	\quad
	\sum_{\square_i \in \mathcal{B}_{\epsilon/l}(\bfx); \square_{i'} \in \mathcal{B}_{\epsilon/l}(\bfx')}
	\quad
	\int_{\bfy \in \square_i, \bfy' \in \square_{i'}} \frac{\rho(\bfx,\bfy) \rho(\bfx',\bfy')}{\left| \bfx + l \bfy - \bfx' - l\bfy' \right|} l^6 \ dV_{\bfy} dV_{\bfy'}
\end{equation}
Assuming a separation of scales, i.e. $\epsilon \ll L$, the periodicity of $\rho$ implies that the interaction between charges contained in $\mathcal{B}_{\epsilon/l}(\bfx)$ and $\mathcal{B}_{\epsilon/l}(\bfx')$ can be replaced by interactions between charges in unit cells at $\bfx$ and $\bfx'$, and then multiplying by the number of unit cells in  $\mathcal{B}_{\epsilon/l}(\bfx)$ and $\mathcal{B}_{\epsilon/l}(\bfx')$.
\begin{equation}
\label{eq:nonlocal-energy-2}
	\sum_{\bfx,\bfx' \in \Omega; \bfx \neq \bfx'}
	\quad
	\left(\frac{\epsilon}{l}\right)^3
	\left(\frac{\epsilon}{l}\right)^3
	\int_{\bfy \in \square, \bfy' \in \square} \frac{\rho(\bfx,\bfy) \rho(\bfx',\bfy')}{\left| \bfx + l \bfy - \bfx' - l\bfy' \right|} l^6 \ dV_{\bfy} dV_{\bfy'}
\end{equation}
Canceling the factors of $l$ and using the notion of Riemann sums as above with $\epsilon^3$ as the volume measure, we can write this as a double integral:
\begin{equation}
\label{eq:nonlocal-energy-3}
	\int_{\bfx,\bfx' \in \Omega; \bfx \neq \bfx'}
	\left(
	\int_{\bfy \in \square, \bfy' \in \square} \frac{\rho(\bfx,\bfy) \rho(\bfx',\bfy')}{\left| \bfx + l \bfy - \bfx' - l\bfy' \right|} \ dV_{\bfy} dV_{\bfy'}
	\right)
	\ dV_{\bfx} \ dV_{\bfx'}
\end{equation}
As in the local contribution, we require the integrand in brackets above to be well-defined when $l \rightarrow 0$.
Recall the dipole scaling $\rho = \tilde{\rho} / l$: the integrand is therefore well-behaved if $\frac{1}{\left| \bfx + l \bfy - \bfx' - l\bfy' \right|}$ scales as $l^2$.

We substitute the multipole expansion from (\ref{eqn:multipole-expansion}) and notice immediately that the first term scales independently of $l$ and the second term scales linearly in $l$.
These would then potentially cause the integrand to diverge as $l\rightarrow 0$.
However, we recall that each unit cell is charge-neutral, i.e. $\int_\square \tilde{\rho}(\bfx,\bfy) \ dV_{\bfy} = 0$.
This causes both the first and second terms in the multipole expansion to vanish.
Physically, this means that the energy is unbounded if every unit cell is not charge-neutral, e.g. recalling the example in Section \ref{toy-example}.

Next, we consider the term $\half l^2 \mathbb{K} : \left( \bfy - \bfy' \right) \otimes \left( \bfy - \bfy' \right)$.
The terms containing $\bfy \otimes \bfy$ and $\bfy' \otimes \bfy'$ vanish from charge neutrality.
The only remaining terms can be readily written as:
\begin{equation}
\label{eq:nonlocal-energy-4}
\begin{split}
	\int_{\bfx,\bfx' \in \Omega; \bfx \neq \bfx'}
	\left(
	\mathbb{K}(\bfx-\bfx') :
	\underbrace{
		\int_{\bfy \in \square} \tilde{\rho}(\bfX,\bfy) \bfy \ dV_{\bfy}
	}_{\bfp(\bfx)}
	\otimes
	\underbrace{
		\int_{\bfy' \in \square} \tilde{\rho}(\bfX',\bfy') \bfy' \ dV_{\bfy'}
	}_{\bfp(\bfx')}
	\right)
	\ dV_{\bfx} \ dV_{\bfx'}
	\\
	=
	\int_{\bfx,\bfx' \in \Omega; \bfx \neq \bfx'} \bfp(\bfx') \cdot \mathbb{K}(\bfx-\bfx') \bfp(\bfx) \ dV_{\bfx} \ dV_{\bfx'}
\end{split}	
\end{equation}
where the terms containing $\bfy \otimes \bfy'$ and $\bfy' \otimes \bfy$ have been combined using symmetry.

Consider finally the terms denoted $O(l^3)$.
These will all go to $0$ as $l\rightarrow 0$.
Physically, these terms represent the contributions from quadrupole and higher-order moments of the charge distribution, i.e. $\int_\square \tilde{\rho}(\bfx,\bfy) \bfy \otimes \bfy \ dV_{\bfy}$ and higher-order.
We see that these terms vanish identically in the limit.
Therefore, terms of higher-order than dipole do not appear in the nonlocal part of the continuum energy, recalling the example in Section \ref{toy-example}.
In general, all short-range forces -- i.e. those that decay faster than dipolar interactions -- do not contribute to the nonlocal term in the limit \cite{friesecke-james, blanc-lebris-lions}.

Finally, a long but straightforward calculation using the divergence theorem and integration-by-parts gives
\begin{equation}
\label{eq:nonlocal-energy-5}
\begin{split}
	& \int_{\bfx,\bfx' \in \Omega} \bfp(\bfx') \cdot \mathbb{K}(\bfx-\bfx') \bfp(\bfx) \ dV_{\bfx} \ dV_{\bfx'} 
	=
	\\
	& \int_{\bfx,\bfx' \in \Omega} \divergence \bfp(\bfx') G(\bfx-\bfx') \divergence \bfp(\bfx) \ dV_{\bfx} \ dV_{\bfx'}
	+ \int_{\bfx,\bfx' \in \partial\Omega} \bfn \cdot \bfp(\bfx') G(\bfx-\bfx') \bfn \cdot \bfp(\bfx) \ dS_{\bfx} \ dS_{\bfx'} 
	\\
	& - 2 \int_{\bfx \in \Omega,\bfx' \in \partial\Omega} \bfn \cdot \bfp(\bfx') G(\bfx-\bfx') \divergence \bfp(\bfx) \ dV_{\bfx} \ dS_{\bfx'} 	
\end{split}
\end{equation}
where $G$ is the standard electrostatics Greens function and $\mathbb{K}:=\nabla^2 G$ from (\ref{eqn:multipole-expansion}).
Note that the condition $\bfx\neq\bfx'$ has not been written for brevity.
An important conclusion from the above formula is that $- \divergence\bfp$ is equivalent to a bulk charge density (the so-called ``bound bulk charge density'') and that $\bfp\cdot\bfn$ is equivalent to a surface charge density (the so-called ``bound surface charge density'').
In that perspective, the formula above simply gives the energy of this composite charge distribution using the usual Green's function relation between charge density and energy.
When $\bfp$ is discontinuous along interior surfaces, this formula gives bound surface charges along these surfaces.

While we have derived the equivalent bound charges using standard integration formulas after taking the limit of the Riemann sum, it is straightforward to derive these directly.
Essentially, we manipulate the Riemann sum using the standard approach in the Riemann sum proof of the divergence theorem (e.g., \cite{tang-book}).
In the interior, we find $\divergence\bfp$ appearing and the boundaries of the infinitesimal element canceling with it's neighbors.
On the boundary of $\partial\Omega$, there is no cancellation leading to the contribution $-\bfp\cdot\bfn$.


\subsubsection{Boundary Contributions due to Partial Unit Cells}
\label{sec:boundary-charge}

We consider now the role of boundaries.
As we have mentioned above and will discuss below in detail, the value of the polarization depends on the chosen unit cell.
Boundary contributions are essential to ensure that the coarse-grained electric fields and other quantities do not depend on the arbitrary choice of unit cell.
There are two contributions: first, due to polarization terminations $-\bfp\cdot\bfn$, and second, surface charges due to partial unit cells on the surface that are not charge-neutral.
The polarization terminations have already been accounted for as shown in (\ref{eq:nonlocal-energy-5}).
In this section, we consider the case of the surface charges due to partial non-charge-neutral unit cells.

For simplicity, we do not compute the total energy which will have straightforward but tedious cross-terms between interior bound charges (due to $\divergence \bfp$) and surface charges as can be seen (\ref{eq:nonlocal-energy-5}).
Instead, we compute the electric potential field due to the surface charges where the calculations are more transparent.
In a formal setting, one can readily go between these calculations.

Consider a point $\bfx \in \partial\Omega$.
At a point $\bfx' \neq \bfx$, the electric potential due to the charges at $\bfx$ is given by:
\begin{equation}
\label{eqn:boundaries-1}
	\phi(\bfx') = \sum_{\bfx \in \partial\Omega} \int_{l\bfy \in \mathcal{D}_{\epsilon}(\bfx) \times C l \bfn} \frac{\rho(\bfx,\bfy)}{|\bfx + l\bfy - \bfx'|} l^3 \ dV_{\bfy}
\end{equation}
Here $\mathcal{D}_{\epsilon}(\bfx)$ is a 2D disk of radius $\epsilon$ located at $\bfx$.
Therefore, $\mathcal{D}_{\epsilon}(\bfx) \times l \bfn$ denotes a squat cylinder of height $C l$ oriented with axis $\bfn$ and cross section $\mathcal{D}_{\epsilon}(\bfx)$.
The vector $\bfn$ is the unit outward normal to $\Omega$.
It is implicit in the above formula that we are considering charges only in the partial unit cells.

We assume that the surface is a rational plane (see Appendix \ref{sec:appendix} for the definition).
The integration above in the directions along the surface (i.e., normal to $\bfn$) can then be reduced to an integration over a single unit cell because of periodicity in those directions\footnote{
It is not clear to us how to proceed without assuming that the surfaces are rational planes.  Irrational surfaces cause severe difficulties in defining surface energies even in simpler models of solids \cite{rosakis-surface-energy}.
}.
The integration in the direction along $\bfn$ reduces simply to the partial unit cells on the boundary and is therefore independent of $C l$.
Following the ideas above for the volume contributions, we can then rewrite this as an integral over a unit cell by putting in the appropriate factor for the number of unit cells in the disk:
\begin{equation}
\label{eqn:boundaries-2}
	\phi(\bfx') = \sum_{\bfx \in \partial\Omega} \frac{\epsilon^2}{l^2} \int_{\bfy \in \triangle} \frac{\tilde{\rho}(\bfX,\bfy)/l}{|\bfx + l\bfy - \bfx'|} l^3 \ dV_{\bfy}
\end{equation}
where we have used the notation $\triangle$ for partial non-neutral unit cells.

Therefore, we require that only the term independent of $l$ from (\ref{eqn:multipole-expansion}) appear above.
Upon taking the Riemann sum and defining the surface charge density $\sigma(\bfx)$, this gives the expected and simple result:
\begin{equation}
\label{eqn:boundaries-3}
	\phi(\bfx') = \int_{\bfx \in \partial\Omega} \frac{1}{|\bfx - \bfx'|} \underbrace{\left( \int_{\bfy \in \triangle} \tilde{\rho}(\bfX,\bfy) \ dV_{\bfy} \right)}_{\sigma} \ dS_{\bfx}
\end{equation}
While we have assumed for simplicity that $\bfx \neq \bfx'$, considering the case $\bfx = \bfx'$ and examining the energy would give us a local contribution analogous to that in the case of the bulk.


\subsection{Role of Boundaries in Compensating for the Non-uniqueness of Polarization}

It is well-known, e.g. \cite{Resta-Vanderbilt}, that the value of $\bfp$ in a periodic solid depends on the choice of unit cell.
This would appear to be a fatal difficulty in using $\bfp$ as a multiscale mediator between the atomic-scale-variation of $\rho$ and continuum-scale quantities.
In the materials physics community, quantum mechanical notions are invoked to obtain a unique choice of polarization for a given periodic charge distribution \cite{Resta-Vanderbilt}.
However, from the perspective of the calculations above, we are simply coarse-graining classical electrostatic interactions and there is no reason for quantum mechanics to play any role.
As we describe in this section, the difficulties noticed by \cite{Resta-Vanderbilt} are entirely due to their starting-point of an infinite periodic solid.
This makes the notion of boundaries ill-defined.
If instead we begin from a finite solid and take the limit of lattice spacing being much smaller than the size of the body (the large-body limit), we see that the surface charges on the boundaries play a critical role.
In short, while the polarization is itself not uniquely-defined, the electrostatic energy that comes from accounting for both the polarization and the surface charge is a unique quantity.
While changing the unit cell changes the value of the polarization density, it also changes the boundary charge in the partial unit cells.
These compensate to give the same value for the electrostatic energy.


\subsubsection{One-Dimensional Illustrative Example}
\label{sec:unit_cell_example}

Consider a finite body $\Omega = (-L,L)\times (-1,1) \times (-1,1)$ with a one-dimensional charge distribution $\rho(\bfx) = \rho_0 \sin (2\pi \frac{x_1}{l})$ (Fig. \ref{fig:1D-example-boundaries}).
We assume that $L$ is an integer multiple of $l$, i.e. $n l = L$.
Guided by the multipole expansion, we compute the dipole moment as the leading contributor to the behavior of the bar without using the fact that the charge distribution is in fact periodic in $\Omega$.
We then compare this to the result obtained by using polarization density field that is defined on the unit cell.

\begin{figure}[ht!]
	\centering
	\includegraphics[width=\textwidth]{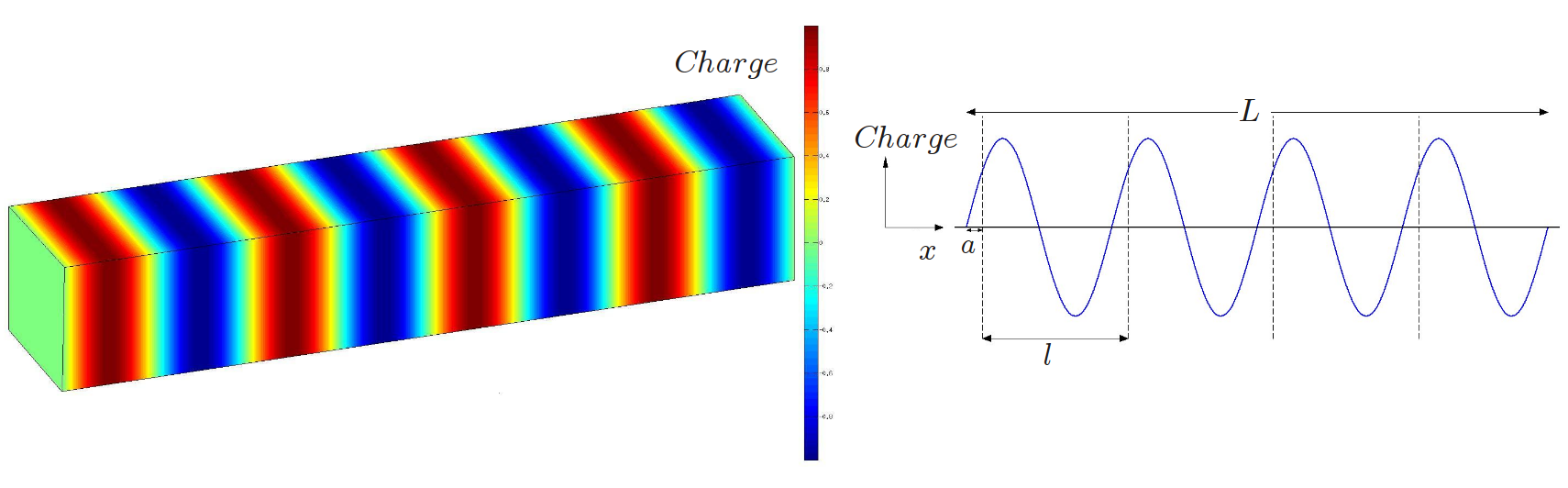}
	\caption{\small Charge distribution in the 1D illustrative example to show the effect of boundaries.}
	\label{fig:1D-example-boundaries}
\end{figure}

The total dipole moment of the bar $\bfP := \int_{\Omega}  \rho(\bfr') \bfr' \ dV_{\bfr'}$ evaluates to $(1,0,0) \times \frac{-4\rho_0 L l}{ \pi}$.

The polarization density $\bfp$ in a single unit cell $l\square = (a,l+a)\times (-1,1) \times (-1,1)$ is defined as $\bfp := \frac{1}{\text{volume}(l \square)} \int_{\bfr' \in l \square}  \rho(\bfr') \bfr' \ dV_{\bfr'}$.
It evaluates to $(1,0,0) \times \frac{- \rho_0 l}{2 \pi} \cos(2\pi\frac{a}{l})$.
This is a classic example showing that $\bfp$ depends on the chosen unit cell, here parametrized by $a$ \cite{Resta-Vanderbilt}.
From (\ref{eq:nonlocal-energy-5}) and the associated discussion, we have no bulk charge because $\bfp$ is identical in each unit cell, but there is a surface charge density given by $-\bfp\cdot\bfn$.
Therefore, there is a total charge of $\frac{2 \rho_0 l}{\pi} \cos(2\pi\frac{a}{l})$ at $+L$, and $\frac{- 2 \rho_0 l}{\pi} \cos(2\pi\frac{a}{l})$ at $-L$.
However, there are partial unit cells at each end: $(-L,-L+a)$ and $(L-l+a, L)$, and these are not charge neutral.
The charge in these cells evaluates to $\frac{-2 \rho_0 l}{\pi} \left( 1 - \cos(2\pi\frac{a}{l})\right)$ and $\frac{2 \rho_0 l}{\pi} \left( 1 - \cos(2\pi\frac{a}{l})\right)$ respectively.
Therefore the total charge at each end, both from the partial unit cells and from $-\bfp\cdot\bfn$, is $\pm \frac{2 \rho_0 l}{\pi}$.
These equal and opposite charges are separated by a distance $2 L$.
Therefore, this is a dipole of strength $\frac{-4 \rho_0 l L}{\pi} $.
Note that we have errors up to order $l$ because the charges due to the polarization terminating on the surface are separated by $L-l$; however, the key point is that in the limit of $l \ll L$, we recover the dipole $\bfP$.


\subsubsection{The General Case}

The key lesson from the example above is that it is critical to account for the charge in the partial unit cells on the boundary.
This ensures that the coarse-graining that exploits the polarization as a multiscale mediator does not depend on the choice of unit cell.
We now examine this in the 3D setting.

First, we decompose $\Omega$ into $\Omega_\square$, with only complete unit cells, and $\Omega_{\#}$ with the surface layer of incomplete unit cells, Fig. \ref{fig:omega-decomposition}.

\begin{figure}[ht!]
	\centering
	\includegraphics[width=180mm]{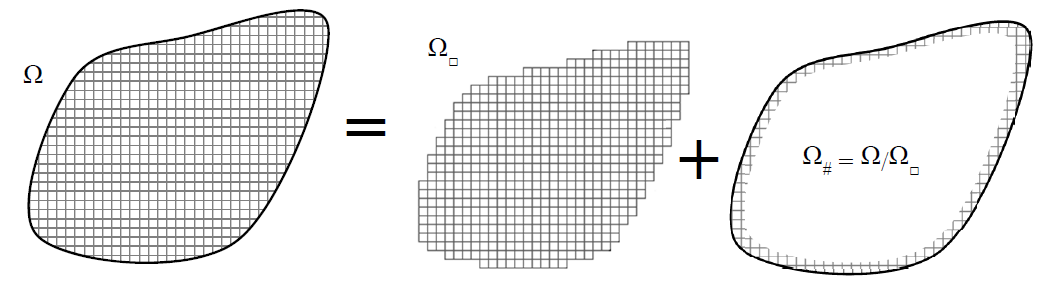}
	\caption{\small Decompose $\Omega$ into $\Omega_\square$ and $\Omega_{\#}$.}
	\label{fig:omega-decomposition}
\end{figure}

We now consider the unit cells adjacent to $\Omega_{\#}$.
Our goal is to modify these unit cells in various ways, and show that the resulting changes in surface charge and polarization density balance each other.

An important element of our strategy is to transform the given unit cell to a unit cell that has a face parallel to the surface under consideration.
Appendix \ref{sec:appendix} shows that a unit cell with this property can always be found when the surface is rational.
As in Section \ref{sec:boundary-charge}, we restrict attention to rational surfaces.

For this special choice of unit cell, we now consider the changes in surface charge and polarization density for various operations.
We use the notation that the lattice vectors tangential to the surface are $\bfh_2$ and $\bfh_3$.
We note that the volume of the unit cell can be written $\bfh_2 \times \bfh_3 \cdot \bfh_1 = \vert \bfh_2 \times \bfh_3 \vert \bfn \cdot \bfh_1$.

First, consider translations of the unit cell as shown in Fig. \ref{fig:unit-cell-change-1}.

Consider Fig. \ref{fig:unit-cell-change-1}a.
A translation along $\bfh_1$ causes an increase in the uncompensated surface charge area density $\Delta\sigma = \frac{1}{\vert \bfh_2 \times \bfh_3 \vert} \int_{\text{\textcircled{1}}} \rho$ in the partial unit cells.
The increase in the polarization density in the translated unit cell is $\Delta\bfp = \frac{1}{\vert \bfh_2 \times \bfh_3 \vert \bfn \cdot \bfh_1} \left( \int_{\text{\textcircled{2}}} \rho \bfy - \int_{\text{\textcircled{1}}} \rho \bfy \right)$.
From the periodicity of the charge distribution $\rho(\bfy+\bfh_1) = \rho(\bfy)$, we have $\Delta\bfp = \frac{1}{\vert \bfh_2 \times \bfh_3 \vert \bfn \cdot \bfh_1} \bfh_1 \int_{\text{\textcircled{2}}} \rho $.
Therefore $\Delta\bfp\cdot\bfn = \Delta\sigma$.

Consider Fig. \ref{fig:unit-cell-change-1}b.
A translation along either $\bfh_2$ or $\bfh_3$ causes no change in $\sigma$.
The change in polarization is $\frac{1}{\vert \bfh_2 \times \bfh_3 \vert \bfn \cdot \bfh_1} \bfh_2 \int_{\text{\textcircled{2}}} \rho $.
Therefore $\Delta\bfp \cdot\bfn = 0$.

In general, one can have a translation along any direction.
Such a translation can be decomposed into components along the lattice directions and the calculations above applied in succession to each direction.

\begin{figure}[ht!]
	\centering
	\includegraphics[width=180mm]{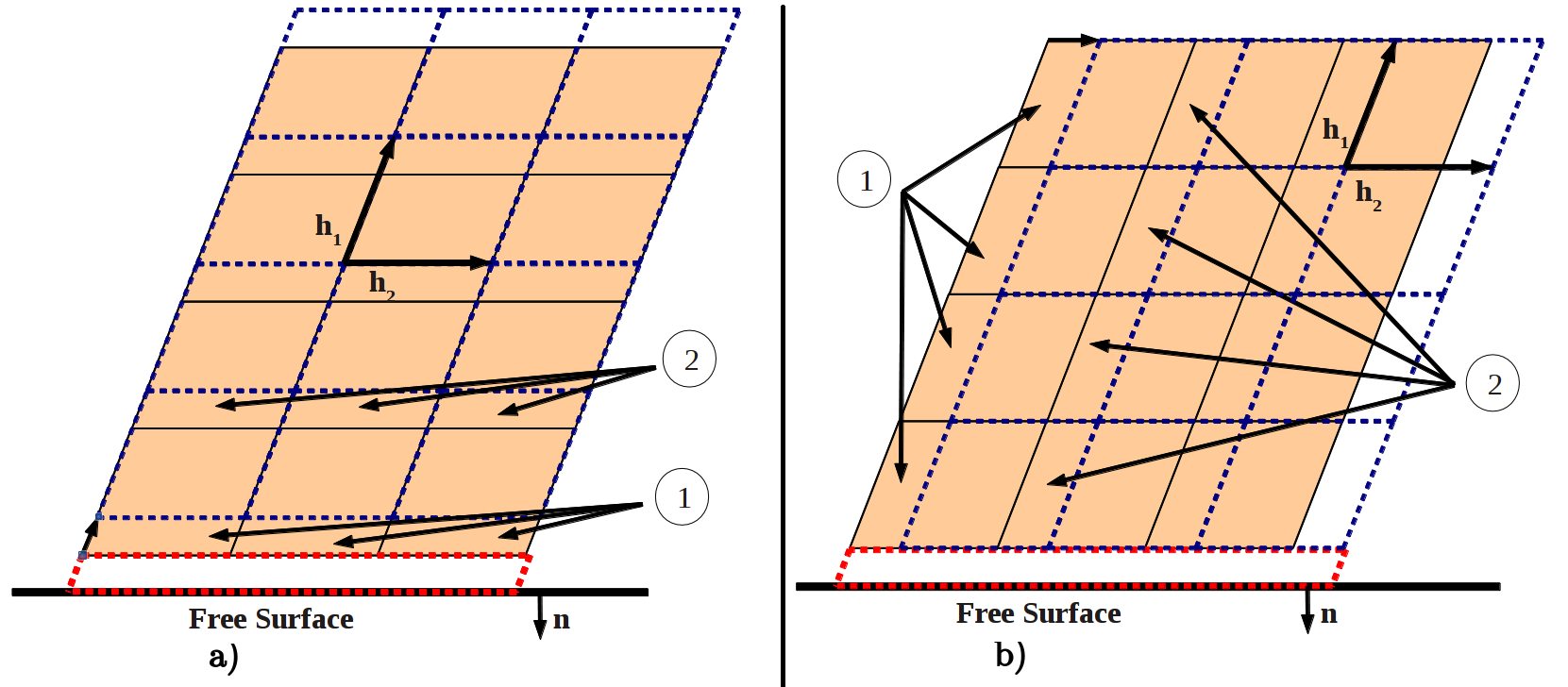}
	\caption{\small Change in charge and polarization due to a translation along $\bfh_1$ and $\bfh_2$ respectively in the special choice of unit cell.}
	\label{fig:unit-cell-change-1}
\end{figure}

Second, consider distortions of the unit cell as shown in Fig. \ref{fig:unit-cell-change-2}.

From reasoning following very closely the previous case of translations of the unit cell, we find that the relation $\Delta\bfp\cdot\bfn = \Delta\sigma$ holds here too.

\begin{figure}[ht!]
	\centering
	\includegraphics[width=180mm]{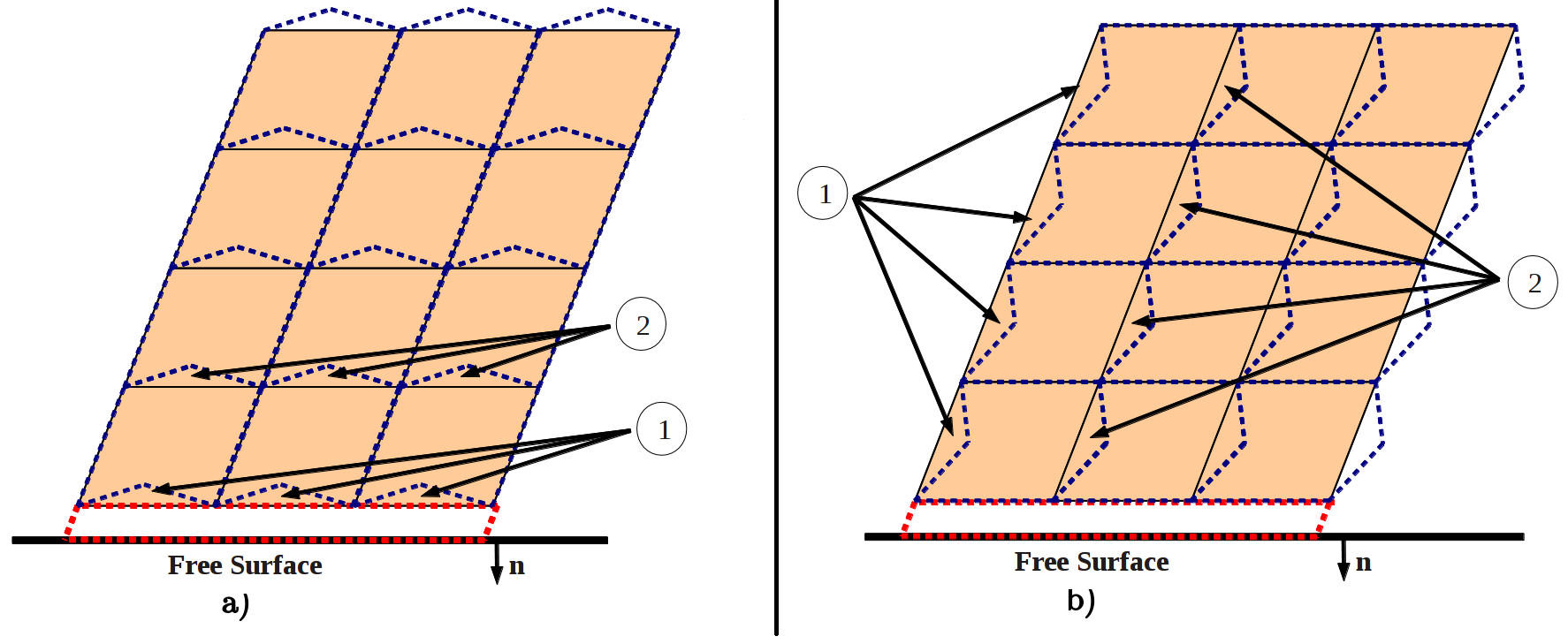}
	\caption{\small Change in charge and polarization due a distortion of the unit cell in the special choice of unit cell.}
	\label{fig:unit-cell-change-2}
\end{figure}

Third, consider changes in the unit cell due to remapping of the lattice vectors as shown in Fig. \ref{fig:unit-cell-change-3}.

As noted in Appendix \ref{sec:appendix}, the relation between $\{ \bff_1, \bff_2, \bff_3 \}$ and $\{ \bfh_1, \bfh_2, \bfh_3 \}$ must be of the form: $\bff_i = \sum_j \mu_i^j \bfh_j$ with $\mu_i^j$ a matrix of integers with determinant $\pm 1$.
An example of such a remapping is in Fig. \ref{fig:unit-cell-change-3}a.

As shown in the example in Fig. \ref{fig:unit-cell-change-2}bcde, regions of the original unit cell are mapped to the new unit cell.
For instance, \circled{1a} maps to \circled{1b}, \circled{2a} maps to \circled{2b}, and \circled{3a} maps to \circled{3b}.
Each of these regions is translated by an integer linear combination of $\{ \bfh_1, \bfh_2, \bfh_3 \}$; however, each region may have a {\em different} integer combination translation.
In addition, the uncompensated unit cell on the boundary also increases in extent (Fig. \ref{fig:unit-cell-change-2}f).
As above, for a periodic charge distribution, when a charged region is translated by an integer multiple of a lattice vector, the consequent change in polarization is simply the total charge in the region times the translation distance.
Therefore $\Delta\bfp = \frac{1}{\vert \bfh_2 \times \bfh_3 \vert \bfn \cdot \bfh_1} \left(\sum_i \nu^1_i \bfh_1 \int_{A_i} \rho + \sum_i \nu^2_i \bfh_2 \int_{A_i} \rho + \sum_i \nu^3_i \bfh_3 \int_{A_i} \rho\right)$, where $\nu^1_i,\nu^2_i,\nu^3_i$ are integers.

The change in the uncompensated charge is simply $\Delta\sigma = \frac{1}{\vert \bfh_2 \times \bfh_3 \vert} \sum_i \nu^1_i \int_{A_i} \rho$ which is related to the extent of the translation along $\bfh_1$.
This matches precisely with $\Delta \bfp\cdot\bfn$.

\begin{figure}[ht!]
	\centering
	\includegraphics[width=180mm]{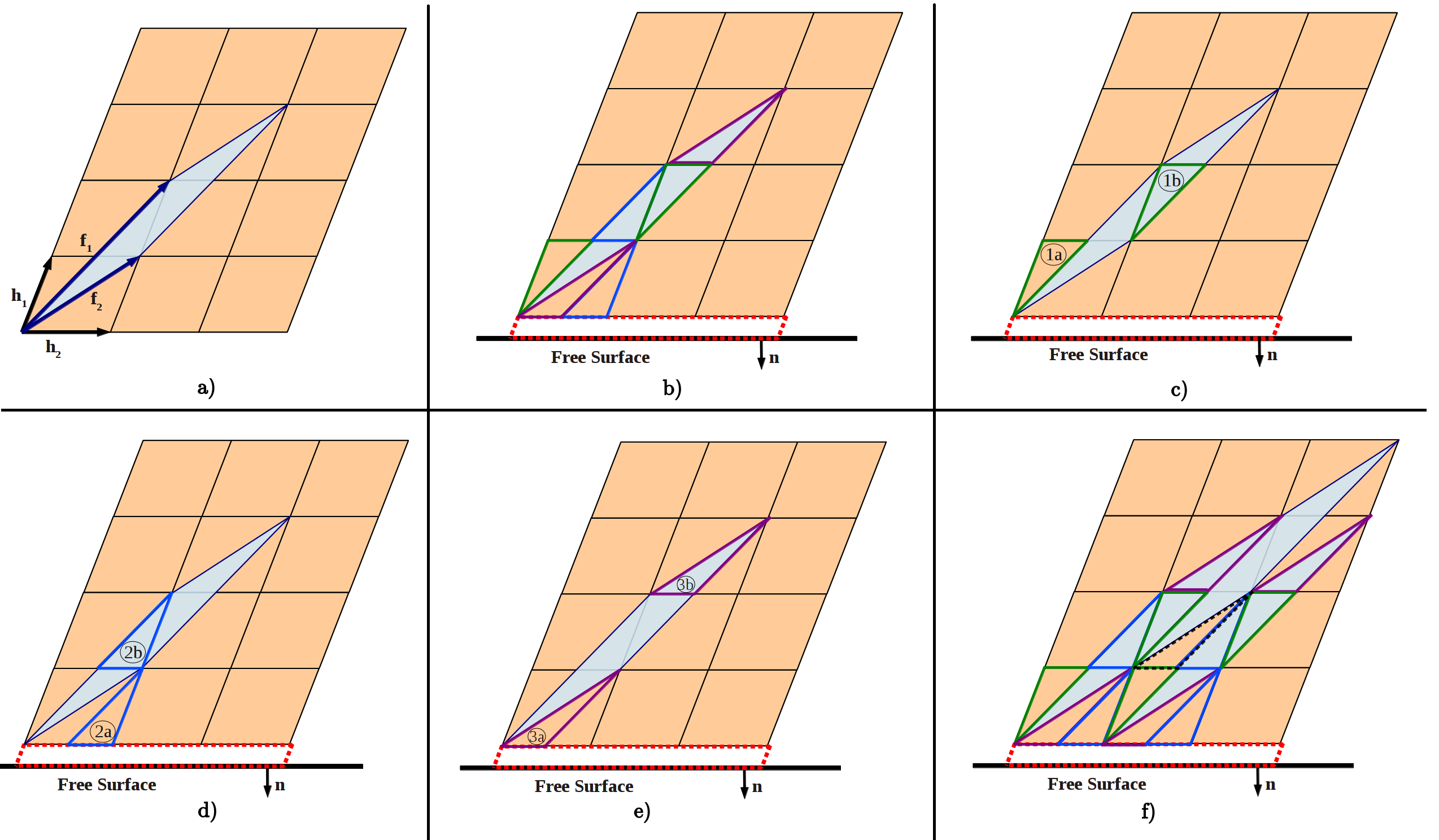}
	\caption{\small Change in charge and polarization due a remapping of the unit cell from the special choice of unit cell.}
	\label{fig:unit-cell-change-3}
\end{figure}

In the general case of modifying a given unit cell to another shape by any combination of the mechanisms studied above, we can conceptually consider mapping the given unit cell to the special unit cell with a surface-parallel face, conducting the modifications with the special unit cell, and then mapping to the desired final unit cell.
Of course, in practice none of this need be done; as long as we are assured that changes in the unit cell are compensated by boundary charges appropriately, we can directly modify the unit cell as desired.

In the interior of the body, $\divergence\bfp$ changes by $O(l)$ when the unit cell is changed.
This follows directly from the definition of $\bfp$ in (\ref{eq:nonlocal-energy-4}) and the chain rule, using the fact that $\bfX / l = \bfx$.
Therefore, the bound bulk charge density is the same in the limit of $l/L \ll 1$.

We note potential connections to ideas of Null-Lagrangians in the issue of a unique definition of the polarization \cite{ericksen-nilpotent}.
Essentially, one can have different expressions for the free-energies, but these lead to the same Euler-Lagrange equation but with different boundary conditions.
Similarly, different choices for the unit cell leave the bulk electrostatics unchanged, because the change in the bulk polarization is compensated by the boundary contributions.


\section{Numerical Implementation}
\label{sec:implementation}

Our numerical implementation for the short-range interactions follows closely the standard QC, e.g. \cite{Tadmor}.
In standard QC, there are two essential steps: first, a reduction of degrees of freedom by interpolation, typically using linear shape functions inspired by finite elements, with atomic resolution in critical regions and coarse-grained elsewhere (Figs. \ref{fig:mesh}, \ref{fig:mesh_zoom}); and second, a fast estimation of the energy (or derivative of the energy with respect to the retained degrees of freedom) using Cauchy-Born sampling.
Among the many variants of QC, we use the local QC for multi-lattices, following \cite{Tadmor}.
Local QC refers to the use of sampling {\em everywhere} in the specimen, not only in the coarse-grained region but also in regions with atomic resolution in the interpolation.

Dielectrics require a multi-lattice description because a dielectric response requires at least two charges in the unit cell that move independently to change the polarization in response to electromechanical fields.
Essentially, the multi-lattice description uses $\bfF$ to track the deformation of the unit cell, and a set of vector-valued fields $\bfzeta^s$ that track the position of individual species $s$ within the unit cell.
We use linear interpolation for the coarse-grained displacement field $\bfu$ implying that $\bfF$ is constant in a given element.
To be consistent with this spatial variation of $\bfF$, we use the same ``constant in each element'' interpolation for $\bfzeta^s$.
If the element were of infinite extent, this choice of interpolation would ensure that the energy density converges to the standard Cauchy-Born theorem.
In the local QC approximation, we estimate the energy of a given element by finding the energy density of atomic unit cells at the selected quadrature / sampling points and multiply by the appropriate weight.

This interpolation of the kinematic variables $\bfF, \bfzeta^s$ implies that the polarization $\bfp$ is also constant in a given element.
Therefore, in terms of effective bound charges, we have no bound bulk charges ($\divergence \bfp = 0$), and we have surface charge density $(\bfp_1 - \bfp_2)\cdot \bfn$ at the element faces.
In the coarse-grained local QC approximation,  the evaluation of electrostatic fields consists simply of finding the fields set up by the charge distributions.
One element of electrostatics is that the electric potential and field are naturally posed in the current configuration.
Therefore, for numerical updates, we compute the electrostatic fields in the current and then pull back to the reference using standard electromechanical transformations \cite{xiao-bhatta}.
We assume in the remainder of the paper that we are dealing with point charges that can move around, but do not change their charge.


\subsection{Energy Minimization through Gradient Descent}

Our interest is in finding energy minimizers to the coarse-grained problem.
Under the local QC approximation, the coarse-grained problem can be written as the standard continuum energy for electromechanical solids \cite{shu-bhatta,xiao-bhatta}:
\begin{equation}
\label{eq:total_energy_again}
	E  =  \underbrace{\int_{\Omega_0} W(\grad_{\bfx_0} \mathbf{u},\bfzeta^s) \ dV_{\bfx_0}}_\text{short-range energy}
		+ \underbrace{ \frac{\epsilon_0}{2}\int_{\R^3}\vert \grad_{\bfx} \phi \vert^2 \ dV_{\bfx} }_\text{long-range (non-local) energy}
		 \quad , \quad \divergence_{\bfx} \grad_{\bfx} \phi = \divergence_{\bfx} \bfp
\end{equation}
The nonlocal expression in (\ref{eq:nonlocal-energy-5}) can be transformed to the form above by first noting that the right side of (\ref{eq:nonlocal-energy-5}) is entirely in terms of the electrostatic Greens function $G$.
Therefore, the electrostatic field $\phi$ can be defined as the solution of the electrostatic equation with charges given by $-\divergence\bfp$ and $\bfp\cdot\bfn$.
The energy density is then simply $\vert \grad_{\bfx} \phi \vert^2$.
The surface charges due to $\bfp\cdot\bfn$ as well as due to incomplete unit cells appear in the boundary conditions for the electrostatic equation.
The local contribution of the electrostatic energy has been absorbed into the short-range term.
Note that the non-local energy integral is posed in the current configuration.

We use a gradient-flow evolution to find the (local) minimizers following \cite{xiao-bhatta,zhang-bhatta}.
The independent variables that remain are $\bfu$ and $\bfzeta^s$.
The polarization is completely defined in terms of unit cell geometry $\bfu$ and intra-cell positions of the charges $\bfzeta^s$, in turn defining the electrostatic potential $\phi$ through the electrostatic Poisson equation.
Therefore, taking variations:
\begin{equation}
\label{eq:deformation_variation}
	\grad_{\bfx_0} \bfu \to \grad_{\bfx_0} \bfu +\eta \grad_{\bfx_0} \bfv
	\quad , \quad
	\bfzeta^s \to \bfzeta^s + \eta \bftheta^s
\end{equation}
Additionally, the variation in the electric field is $\grad_{\bfx} \phi \to \grad_{\bfx} \phi +\eta \grad_{\bfx} \psi$ but this variation is constrained to variations in $\bfp \to \bfp + \eta \bfq$ by the Poisson equation, i.e. $\divergence_{\bfx} \grad_{\bfx} \psi = \divergence_{\bfx} \bfq$.

The definition of gradient flow results in the following statement:
\begin{equation}
\label{eq:grad_flow}
\begin{split}
	\int_{\Omega_0} \left (\dot{\bfu}\cdot \bfv + \sum_{s} \dot{\bfzeta}^s \cdot \bftheta^s \right ) \ dV_{\bfx_0}
		 = & - \left. \frac{\partial{}}{\partial{\eta}} E \left ( \grad_{\bfx_0} \bfu +\eta \grad_{\bfx_0} \bfv, \bfzeta^s + \eta  \bftheta^s \right) \right|_{\eta=0} \\
		 = & \int_{\Omega} \left( \frac{\partial{W}}{\partial{(\grad_{\bfx_0} \bfu)}}: \grad_{\bfx_0}\bfv + \sum_{s} \frac{\partial{W}}{\partial{\boldsymbol{\zeta^s}}} \cdot \bftheta^s \right) \ dV_{\bfx_0} \\
		& + \epsilon_0 \int_{\R^3} \grad_{\bfx} \phi \cdot \grad_{\bfx} \psi \ dV_{\bfx}
\end{split}
\end{equation}
The nonlocal integral over $\R^3$ can be transformed by multiplying $\divergence_{\bfx} \grad_{\bfx} \psi = \divergence_{\bfx} \bfq$ by $\phi$ and integrating over $\R^3$.
Using integration by parts on the left side and then pulling it back to the reference gives:
\begin{equation}
	\int_{\Omega_0}\grad_{\bfx} \phi \cdot \bfq J \ dV_{\bfx_0} = \int_{\R^3} \grad_{\bfx} \phi \cdot \grad_{\bfx} \psi \ dV_{\bfx}
\end{equation}
where $J$ is the Jacobian of the deformation.
This is substituted for the nonlocal integral in (\ref{eq:grad_flow}) to get:
\begin{equation}
\label{eq:grad_flow-1}
\begin{split}
	\int_{\Omega_0} \left (\dot{\bfu}\cdot \bfv + \sum_{s} \dot{\bfzeta}^s \cdot \bftheta^s \right ) \ dV_{\bfx_0}
		=
		&
	\int_{\partial{\Omega_0}}\frac{\partial{W}}{\partial{(\grad_{\bfx_0} \bfu)}} : \bfn \bfv \ dS_{\bfx_0} 
	- \int_{\Omega_0} \left(\divergence_{\bfx_0} \frac{\partial{W}}{\partial{(\grad_{\bfx_0} \bfu)}}\right) \cdot \bfv \ dV_{\bfx_0}
		\\
		&
	+\int_{\Omega_0}\sum_{s} \frac{\partial{W}}{\partial{\bfzeta^s}}\cdot \bftheta^s \ dV_{\bfx_0}
	+ \epsilon_0 \int_{\Omega_0} \grad_{\bfx} \phi \cdot \bfq J  \ dV_{\bfx_0}
\end{split}
\end{equation}

We now examine the relation between $\bfq$ and the variations $\bfv$ and $\bftheta$.
The polarization density $\bfp$ is defined by
\begin{equation}
\label{eq:polar_def_2}
	\bfp=\frac{1}{\det(\bfF) V_0} \sum_{s} Q^s \bfF \bfzeta^s
\end{equation}
Note that $\bfzeta^s$ is the position of species $s$ in the reference configuration, while $\bfp$ is the polarization density in the current configuration.
Similarly, $V_0$ is the volume of the unit cell in reference configuration.
Hence, both $V_0$ and $\bfzeta^s$ are pushed forward to the current.

Taking the variation of $\bfp$
\begin{equation}
\label{eq:polar_def_q}
	\bfp+ \eta \bfq= \frac{1}{\det(\bfF + \eta \bfG) V_0} \sum_{s} Q^s (\bfF+\eta \bfG) (\bfzeta^s+ \eta \bftheta^s)
\end{equation}
where $\bfG:=\grad_{\bfx_0} \bfv$.
Noting that $(\det(\bfF+\eta \bfG))^{-1} = \det(\bfF)^{-1} \det(\bfI+\eta \bfF^{-1} \bfG)^{-1} = \det(\bfF)^{-1} \left( 1-\eta \tr(\bfF^{-1}\bfG)- O(\eta^2) \right)$, we find:
\begin{equation}
\label{eq:q_def}
	\bfq =  \frac{1}{\det(\bfF) V_0} \sum_{s} Q^s \left( \bfF \bftheta^s+  \bfzeta^s \cdot \grad_{\bfx_0} \bfv  - \tr(\bfF^{-1} \grad_{\bfx_0} \bfv ) \bfF \bfzeta^s \right)
\end{equation}
after ignoring terms of order higher than linear in $\eta$.
Using this expression for $\bfq$, we obtain the following expression for $\int_{\Omega_0}\grad_{\bfx}\phi \cdot \bfq J \ dV_{\bfx_0}$ after using integration-by-parts and the divergence theorem:
\begin{equation}
\label{eq:q_energy}
\begin{split}
	\frac{1}{V_0} \sum_{s} Q^s \int_{\Omega} \grad_{\bfx}\phi \cdot \left(\bfF \bftheta^s \right) \ dV_{\bfx_0}
		+
	\frac{1}{V_0} \sum_{s} Q^s \left(\int_{\partial{\Omega_0}} \grad_{\bfx}\phi \bfzeta^s : \bfn \bfv \ dS_{\bfx_0}
	 -\int_{\Omega_0}\divergence_{\bfx_0}\left(\grad_{\bfx}\phi \bfzeta^s \right) \cdot \bfv \ dV_{\bfx_0} \right)
		\\
		-
	\frac{1}{V_0} \sum_{s} Q^s \left( \int_{\partial{\Omega_0}}\grad_{\bfx}\phi \cdot \left( \bfF \bfzeta^s \right) \bfF^{-T} : \bfn \bfv \ dS_{\bfx_0}
	 - \int_{\Omega_0}\divergence_{\bfx_0}\left( \grad_{\bfx}\phi \cdot \left( \bfF \bfzeta^s \right) \bfF^{-T} \right) \cdot \bfv \ dV_{\bfx_0} \right)
\end{split}
\end{equation}

Defining $\bfA^s := \grad_{\bfx}\phi \bfzeta^s - \grad_{\bfx}\phi \cdot \bfF\bfzeta^s \bfF^{-T}$, collecting terms in (\ref{eq:q_energy}, \ref{eq:grad_flow-1})  and localizing using the arbitrariness of the variations gives us the compact form:
\begin{equation}
\label{eq:u_gov}
	\dot{\bfu} = \divergence_{\bfx_0} \left(\frac{\partial{W}}{\partial{\grad_{\bfx_0} \bfu}}+\epsilon_0\sum_{s} \frac{Q^s}{V_0}\bfA^s\right)
	\quad , \quad
	\dot{\bfzeta^s} = -\frac{\partial{W}}{\partial{\bfzeta^s}} -\epsilon_0\frac{Q^s}{V_0} \grad_{\bfx} \phi \cdot \bfF
\end{equation}
with the boundary term:
\begin{equation}
\label{eq:boundary_gov}
	\left(\frac{\partial{W}}{\partial{\grad_{\bfx_0} \bfu}} +  \epsilon_0\sum_{s}\frac{Q^s}{V_0}\bfA^s \right) \cdot \bfn =0
\end{equation}
These equations provide the gradient descent equations for the fields $\bfu$ and $\bfzeta^s$.
The equation for $\bfu$ involves a standard mechanical stress as well as a so-called Maxwell or electromechanical stress.
The equation for $\bfzeta^s$ is a local equation (i.e., not a PDE), but is coupled through $W$ and the electric field to the nonlocal electrostatics and the PDE for momentum balance.


\subsubsection{Electromechanical Transformations to the Reference Configuration}

For simplicity, we solve the electrostatic Poisson equation in the current configuration for the electric potential $\phi(\bfx)$ and compute the electric field $\grad_{\bfx}\phi(\bfx)$ and rewrite $\bfx=\bfx(\bfx_0)$ from the deformation map.
In our setting, the energy (both short- and long-range) and electric fields are based entirely on the atomic position in the current configuration.
Therefore, in common with much of continuum mechanics, the reference configuration can be considered simply a change of variables for bookkeeping convenience.
It follows that the definition of electrical quantities in the reference configuration are not essential.
This can also be readily observed from (\ref{eq:q_energy}); the physical quantity of interest there is $\grad_{\bfx}\phi(\bfx)$, and while it is certainly possible to define the electric field in the reference, it is not physically important.

This has led to a number of different proposals for referential electrical quantities in the literature.
For instance, our (\ref{eq:polar_def_2}) suggests that the reference polarization $\bfp_0(\bfx_0)$ is given by $\bfp(\bfx(\bfx_0)) = \det(\bfF)^{-1} \bfF \bfp_0(\bfx_0)$, i.e. the polarization transforms as material line elements that carry charges but with an additional factor accounting for volume changes.

In \cite{xiao-bhatta}, they assume instead $\bfp(\bfx(\bfx_0)) = \det(\bfF)^{-1} \bfp_0(\bfx_0)$, and further assume that the reference electrostatic potential $\phi_0(\bfx_0) = \phi(\bfx(\bfx_0))$; this gives them that the electric field transforms as line elements using the identity $\grad_{\bfx} = \bfF^{-T}\grad_{\bfx_0}$.
This is consistent with the differential geometric notion of the electric field as a $1$-form, i.e. it is a quantity that is integrated along lines.

In \cite{ponte-siboni}, following \cite{dorfmann-ogden}, they use yet another transformation; they work with electric displacement $\bfD$ and electric field $\bfE$ as primary variables rather than polarization.
These variables are motivated by the fact that the continuum problem posed in polarization and electric potential leads to a saddle-point variational problem,  whereas the minimization structure is preserved in $\bfD$ and $\bfE$.
The electrostatic equations in these variables are $\divergence \bfD = 0$ and $\curl \bfE = 0$.
For $\bfE$, these imply it should transform as a material line element and matches with \cite{xiao-bhatta} as well as the differential geometric notion of the electric field as a $1$-form.
For $\bfD$ however, these imply a transformation $\bfD = \det(\bfF)^{-1} \bfF^{-T} \bfD_0$ following the differential geometric notion of the electric displacement as a $2$-form, i.e. a quantity that is integrated over surfaces; this is obviously identical to the standard stress transformation.
This differential geometric notion is also implicitly exploited in Section 4 of \cite{kohn-shipman} in finding the appropriate averaging for the different field quantities.
An essential practical advantage of this transformation that is exploited by \cite{ponte-siboni} in their homogenization analysis is that the equations retain their structure in the reference configuration, i.e. $\divergence_{\bfx_0} \bfD_0 = 0$ and $\curl_{\bfx_0} \bfE_0 = 0$.

While the transformations proposed by these other workers are physically appealing for the reasons mentioned above, the transformation implied by the microscopic model of the polarization is also physically motivated.
These different transformations are not consistent with preserving the relation $\bfD = \epsilon_0 \bfE + \bfp$ between corresponding quantities in the reference.


\subsection{Local Quasicontinuum for Multi-lattices with Short-Range Interactions}

We use standard finite element linear shape functions, $N_a$ defined at the nodes $a$ to approximate the displacements $\bfu$ of the lattice vectors:
\begin{equation}
\label{eq:u_shape}
	\bfu \approx \sum_{a} \bfu_a N_a \Rightarrow \grad_{\bfx_0} \bfu \approx \sum_{a} \bfu_a \grad_{\bfx_0} N_a
\end{equation}
where $\bfu_a$ are the nodal displacements.
Defining $\bfB$ as the Piola-Maxwell stress tensor,
\begin{equation}
\label{eq:div_stress}
	\divergence_{\bfx_0} \bfB = \divergence_{\bfx_0} \left(\frac{\partial{W}}{\partial{\grad_{\bfx_0} \bfu}}+\epsilon_0\sum_{s}\frac{Q^s}{V_0}\bfA^s\right)
\end{equation}
we can write (\ref{eq:u_gov}) as  ${\bf 0} = \divergence_{\bfx_0} \bfB$.
Standard nonlinear finite element methods can now be used; the key difference is that we need to compute the electromechanical contribution to the stress at every iteration.
At the same time, we also iterate with respect to $\bfzeta^s$ noting that our interpolation for these variables is piecewise constant, i.e. constant in a given element.

There are two short-range calculations: first, the Piola-Maxwell stress tensor, and second, the minimization over $\bfzeta^s$.
Following the complex local QC method \cite{Tadmor}, the deformation gradient and $\bfzeta^s$ are related to the atomic displacements through the Cauchy-Born rule.  We assume that the energy of each element can be approximated by assuming a homogeneous deformation of the crystal through the deformation gradient.
Given an interatomic potential $U$, we use
\begin{equation}
\label{eq:pair_potential}
	\frac{\partial{W}}{\partial{\grad_{\bfx_0} \bfu}} = \frac{1}{2V}\sum_{\text{atoms in cutoff}} \frac{\partial{U}}{\partial{\bfr}}\frac{\partial{\bfr}}{\partial{\grad_{\bfx_0} \bfu}}
\end{equation}
where $\bfr$ are the positions of the atoms and are obtained from $\bfu$ and $\bfzeta^s$.
Similarly, we can compute
\begin{equation}
\label{eq:pair_potential2}
	\frac{\partial{W}}{\partial{\boldsymbol{\zeta}^s}} = \frac{1}{2V}\sum_{\text{atoms in cutoff}} \frac{\partial{U}}{\partial{\bfr}}\frac{\partial{\bfr}}{\partial{\boldsymbol{\zeta}^s}}
\end{equation}
The minimization over the nodal values of $\bfu$ and the element values of $\bfzeta^s$ are conducted in a coupled manner.


\subsection{Coarse-graining of the Long-Range Electrostatic Interactions}

The energy minimization calculation in (\ref{eq:u_gov}) requires the electric field for a given distribution of charge, or equivalently for a given $\bfp$.
As noted above, we have a constant $\bfF$ and $\bfzeta^s$ field in each element, i.e. they are discontinuous only along element boundaries.
Consequently, $\bfp$ as defined in (\ref{eq:polar_def_2}) is also constant in each element and discontinuous along the element boundaries.
These discontinuities in polarization result in surface charge densities $(\bfp_1 - \bfp_2) \cdot \bfn$.
We compute the fields due to these relatively simple charge distributions using direct Greens function integrations.


\section{Crystal Free Surface Subject to Inhomogeneous Electric Fields}
\label{sec:examples}

We apply the methodology described above to a simple setting of a crystal with a free surface subject to an inhomogeneous external electric field due to a point charge above the surface.
We use short-range potentials based on the bi-species Lennard-Jones model used for Ni-Mn by \cite{Hildebrand}, giving a tetragonal neutral lattice with a body-centered ion and a charged shell.
The polarization is oriented along the tetragonal direction when external fields are absent, thereby providing a spontaneous polarization.
This provides a simple model of many widely-used perovskite ferroelectrics such as barium titanate and lead titanate.

Given that this is a model material rather than numerically accurate, we aim to elucidate the physics of electromechanics.
There are two independent ratios of interest.
One is the strength of the ionic charges v. the strength of the external charge.
The other is the strength of ionic interactions v. the strength of the bonded interactions.
We explore the effect of the former by increasing external charge while holding ionic charges fixed.
We explore the effect of the latter by scaling the electrostatic interactions.
This enables us to test a class of materials that range from non-ionic to ionic.

We consider a specimen with the spontaneous polarization oriented tangential to the surface.
A single point charge is placed near the surface.
In all of these examples, atomic resolution is used in regions of interest -- particularly beneath the external charges -- while coarser resolution is provided everywhere else.
A fine mesh is introduced near the point charge with coarsening throughout the rest of the body.
Fig. \ref{fig:mesh} shows the full mesh, while Fig. \ref{fig:mesh_zoom} shows the zoomed in atomistically resolved portion of the mesh with all atoms plotted in the current configuration.
The black atoms are nodes while the green atoms are constrained by the interpolation.

We conduct three calculations with the electrostatic interactions scaled by factors of $1, 2, 4$ respectively, while holding the charge location and strength fixed.
For the given charge strength, a scaling by $1$ produces a surface distortion near the point charge, while the scaling by $2$ and $4$ produce nucleation-like events\footnote{
A nucleation-like event refers to localized switching of polarization.  We do not expect our method to be qualitatively accurate beyond this regime.
}.
Fig. \ref{fig:scaling-1} shows the stress and polarization fields near the point charge for electrostatic scaling of $1$.
Similarly, Fig. \ref{fig:scaling-4} shows the stress and polarization for the electrostatic scaling of $4$.

\begin{figure}[ht!]
	\centering
	\includegraphics[width=\textwidth]{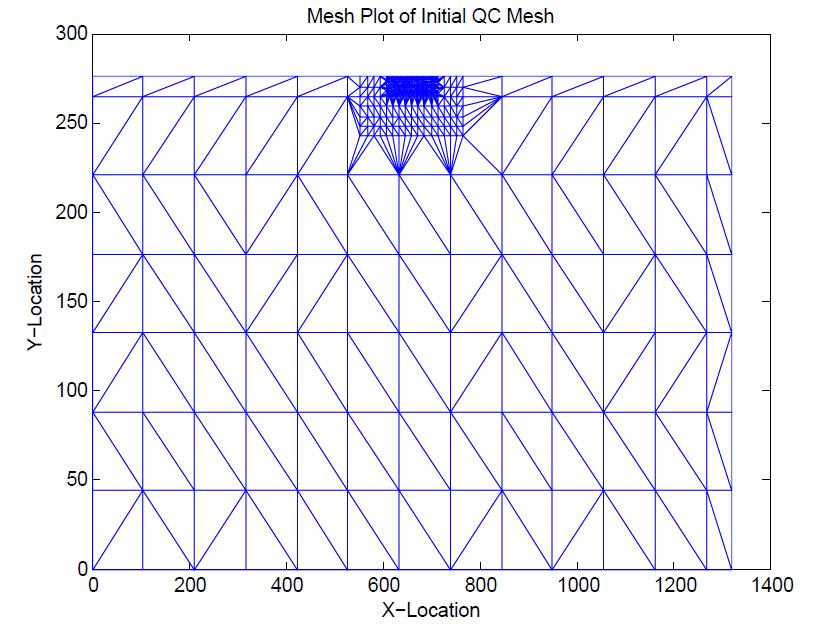}
	\caption{\small The mesh of the entire specimen.  The lengthscales are angstroms.}
	\label{fig:mesh}
\end{figure}

\begin{figure}[ht!]
	\centering
	\includegraphics[width=\textwidth]{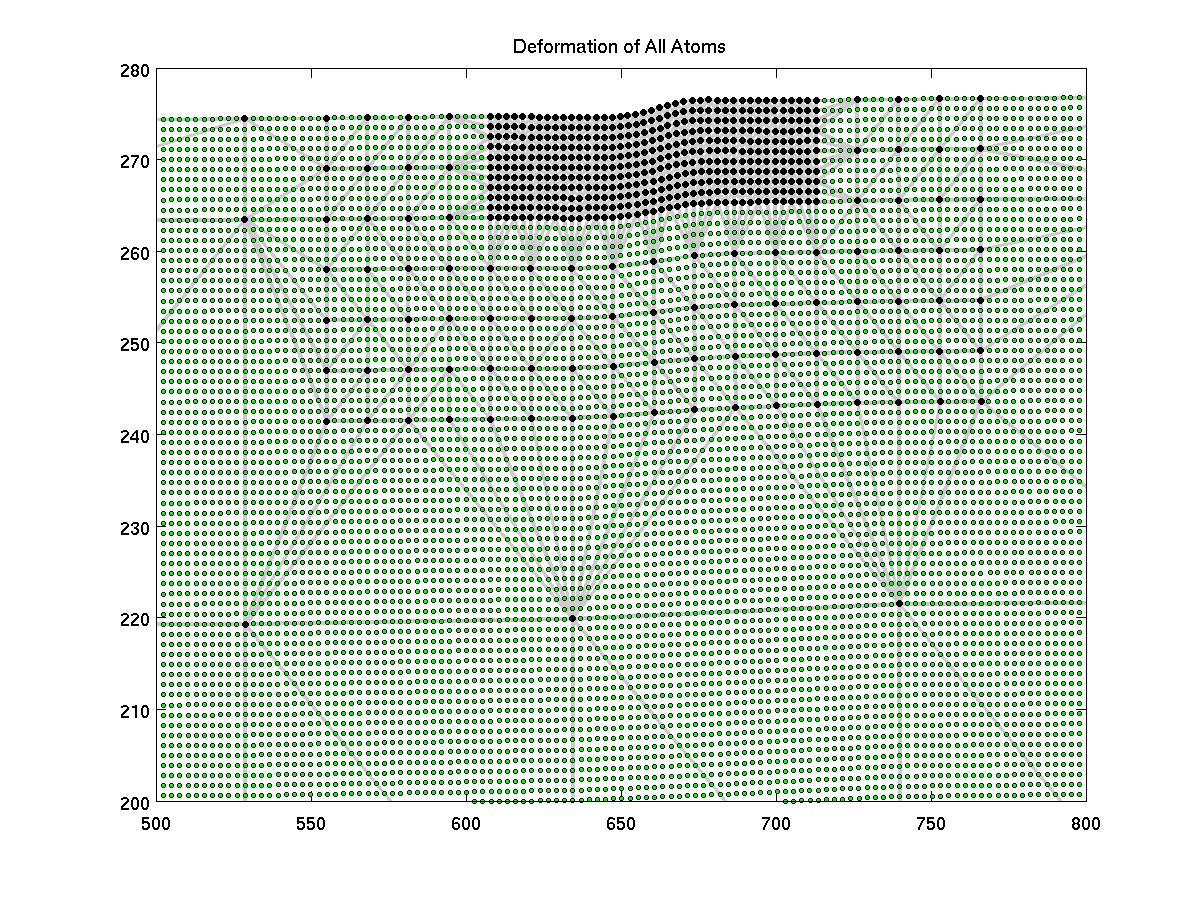}
	\caption{\small Close up view of the atomically-resolved portion of the sample.  Without applied fields and loads, the surface of the specimen is flat.  The surface feature is caused by an applied electric field.}
	\label{fig:mesh_zoom}
\end{figure}

The next set of calculations replaces the single point charge by multiple point charges ($2$ and $4$ respectively), keeping the total charge the same as the case of the single point charge and with spacing between charges on the order of the height above the surface.
In subsequent sets of simulations, these charges were moved closer to the surface until eventually a nucleation-like event occurred.
Stress plots are provided for the $2$ and $4$ charges at various distances from the free surface in Figs. \ref{fig:stress-2chargefar}-\ref{fig:stress-4chargeclose}.

An interesting feature is the presence of two stress lobes beneath the surface of the material.
To further investigate this, we replace the point charge by a dipole (two charges with opposite signs).
The resulting deformation as seen in Figure \ref{fig:stress-dipole} is a depression instead of the up-down pattern seen in loadings with charges of the same sign.  Additionally, the two stress lobes previously seen disappear and merge into a single stress lobe beneath the surface.

The stress lobes in non-dipole loadings do not appear to match continuum phase-field calculations, Fig. \ref{fig:Lun} following \cite{Lun}.
These continuum calculations are based on linearized electromechanics whereas the atomistic calculations in this paper are inherently large-deformation.
The small-deformation approximation implies that electric fields are computed in the reference as well as potential large rotations being ignored.
Further, the material model is also linear elastic.
To examine if these assumptions are responsible for the inability of continuum models to capture this feature, we perform the following tests.
First, we use precisely our approach except that the electric fields are computed in the reference as in linearized electromechanics.
The resulting stress field still shows the double-lobe structure, though with a slightly reduced magnitude (Fig. \ref{fig:finite_elect}).
Second, we examine the geometric linearization of the strains; Fig. \ref{fig:normal_strain22} shows a plot of the normalized difference between the non-linear and linearized $\epsilon_{22}$ strain measures which are far too small to be the cause.
Third, we test against a fully atomic level description without any coarse-graining.
While the system size is small due to computational limitations of the electrostatics, we see a 2-lobe structure Fig \ref{fig:full-atomic}.
Algorithmic reasons prevent us from computing the stress and deformation gradient in the fully-atomic setting, so we instead examine the change in energy of each atom from the perfect crystal.
While qualitative, this calculation supports the view that the double-lobe structure is an atomic effect that we are able to capture despite the local QC approximation.

Regarding the lack of agreement with continuum phase-field, we conjecture that this may simply be due to the gradient penalty preventing the development of fine-scale features such as the double-lobe.
Our approach can be viewed as a phase-field method with no gradient penalty.
In typical phase-field modeling, the gradient penalty is taken to significantly larger than justified by domain wall widths, because the goal there is to predict microstructure and not defect structure.
This may cause fine-scale features such as the double-lobe to be washed out.
Of course, it is also debatable whether our fully local QC model is appropriate to model such fine-scale features.

Finally, we examine the effect of a mechanical indenter that is pressed into the surface.
Plots of the stress and polarization are in Fig. \ref{fig:stress-indent}.
These are largely as expected.

We note that in these calculations, the electrostatics  appears to induce a lengthscale.
However, neither classical electrostatics not the local QC -- which is essentially classical continuum mechanics with an atomically-informed constitutive model -- has an intrinsic lengthscale.
The induced lengthscale from the electrostatics is nonlocal though problem-dependent, i.e. it depends on sample size, boundary conditions, and so on.


\section{Discussion}
\label{sec:conclusion}

We have presented a multiscale atomistic method for ionic solids and other materials where electrostatic interactions are long-range.
Our approach is based on using the polarization field as a multiscale mediator between atomic-level rapidly-varying charge distributions and the continuum scale electrical quantities.
Our coarse-graining strategy relies on ideas developed by \cite{james-muller} and others that followed them \cite{schlomerkemper-schmidt,xiao-bhatta,puri-bhatta}.
The method that we have presented enables QC and other multiscale methods to go beyond purely short-range interactions, that are characteristic of many structural materials, to functional and electronic materials of interest today.

The method presented here is a first attempt towards the goal of understanding the multiscale electromechanics of defects in ionic solids.
Consequently, there remain many important open questions.
\begin{enumerate}

\item The presentation here is based on energy minimization which implies zero temperature; recent methods based on extending this to finite temperature in the setting of short-range interactions can perhaps be adapted to our setting \cite{kulkarni}.
An alternate approach is \cite{tadmor-EffHamil}, where the powerful method of Effective Hamiltonians has been applied to study the finite temperature behavior of ferroelectrics in the local QC setting.

\item We have started from an atomic viewpoint in this paper whereas QC methods have been demonstrated for the orbital-free density functional theory -- a version of (ground state) density functional theory that is restricted to metals with mobile electrons.
However, variants of density functional theory have difficulty with charged defects as rigorously demonstrated in \cite{cances-ehrlacher}.
In addition, density functional theory has important qualitative failings in computing bandgaps, van der Waals interactions, and so forth \cite{DFT-vdW-crap}.
Therefore, it appears to be more useful to begin from the atomic level with well-calibrated and trustworthy potentials.

\item We have worked within the local QC setting which can alternately be considered a standard finite element approach with the constitutive relation being drawn from atomistics rather than a prescribed function.
An important further step is to couple the coarse-grained model with a region with truly atomic resolution in the vicinity of the defect.
In this regard, the key coarse-graining ideas presented here carry through to that setting.
This open question is an area of our current research and we expect to report on this in the near future.

\item Our use of linear interpolation restricts us from capturing potentially important effects based on strain gradients, in particular the phenomenon of flexoelectricity that can be relevant at small scales \cite{pradeep-flexo}.
However, this phenomenon is relevant at large strain gradients that typically can only occur in localized regions of the sample.
Therefore, a QC method that fully resolves the defect region can potentially capture this phenomenon even with linear interpolations.

\item Our coarse-graining of the electrostatics in this paper assumed complete separation of scales.
However, for a real calculation, this will naturally not be true.
It is therefore important to study and quantify the errors when the separation of scales is large but not complete.
This will enable the construction of an algorithm with controlled error tolerance.
Directly related to this is the derivation of a rigorous limit as opposed to the formal presentation in this paper.
These related open questions are an area of our current research, following techniques in \cite{schlomerkemper-schmidt}.

\end{enumerate}

The example that we have studied of the double-lobe structure beneath a point charge provides an important motivation for the further development of our method.
Atomic-scale features such as the double-lobe are of importance to phenomena such as domain nucleation and cannot be captured by continuum models.
On the other hand, the limited size of our admittedly-unoptimized brute force atomic calculation shows the need for coarse-graining efforts.

Finally, we have examined the issue of whether polarization is unique.
As noted above, evaluating the classical definition of polarization provides a unit cell-dependent quantity.
This is obviously a potential disaster for a coarse-graining scheme based on the polarization field.
The materials physics orthodoxy appeals to quantum mechanical concepts to fix a unique value of the polarization \cite{Resta-Vanderbilt}.
Continuum mechanicians have used variational notions to achieve similar ends \cite{puri-bhatta}.
The view that we have taken in this paper is that there is simply no need to have a unique value for the polarization.
When partial unit cells that provide boundary charges are accounted for consistently, the coarse-grained electric fields and other relevant quantities are independent of the choice of unit cell; the polarization is merely a multiscale intermediary.
In continuum theories of electromechanical solids, e.g. \cite{xiao-bhatta,shu-bhatta}, the polarization appears in the standard local free energy density, not only in the electrostatics.
In the local free energy density, the polarization can be considered as a quantity that tracks the atomic positions rather than relevant to electrostatics.
In that perspective, the specific choice of unit cell is irrelevant.
Broadly, the view advocated in this paper is in the spirit of classical continuum mechanics: the polarization field simply provides some information about the atomic-level, and one can take any choice as long as consistent transformations between energy, kinematics, and boundaries are respected.
The immediate analog in continuum mechanics is the freedom in the choice of reference configuration and the corresponding value of the deformation field and strain energy density response function as long as care is taken to define suitable transformations between different choices.


\section*{Acknowledgments}
\label{sec:acknowledgments}

We thank ARO Numerical Analysis for financial support through a Young Investigator grant (W911NF-12-1-0156).
Jason Marshall also acknowledges support from the Northrop Graduate Fellowship Award from Carnegie Mellon University.
Kaushik Dayal also acknowledges support from AFOSR Computational Mathematics (FA9550-09-1-0393) and AFOSR Young Investigator Program (FA9550-12-1-0350).
Kaushik Dayal thanks the Hausdorff Research Institute for Mathematics at the University of Bonn for hospitality.
This research was also supported in part by the National Science Foundation through TeraGrid resources provided by Pittsburgh Supercomputing Center.
We thank Richard D. James, Saurabh Puri, and Yu Xiao for useful discussions.


\appendix

\section{Transformation of Unit Cell in a Crystal Lattice to Obtain a Unit Cell with One Face Parallel to a Given Plane}
\label{sec:appendix}

We show in this section that any crystal lattice can be described by a unit cell with one face parallel to  a given plane.
This is an essential ingredient of our proof that the change in polarization due a change in the lattice unit cell is balanced by a corresponding change in surface charge density.

We first show the following proposition.

\begin{proposition}
Consider a crystal described by lattice vectors $\{ \bff_1, \bff_2, \bff_3 \}$ and reciprocal lattice vectors $\{ \bff^1, \bff^2, \bff^3 \}$.
Consider a rational plane\footnote{
A rational plane has a normal $\bfn$ that can be represented $\bfn = \sum_i n_i \bff^i$ with $n_i$ integers.
}
 with normal $\bfn$.

This lattice can also be described by lattice vectors $\{ \bfg_1, \bfg_2, \bfg_3 \}$, where $\bfg_2 \perp \bfn$, $\bfg_3 = \bff_3$, and $\bfg_1$ will be shown below to exist.
\end{proposition}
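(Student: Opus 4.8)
The plan is to translate the geometric requirement $\bfg_2 \perp \bfn$ into an arithmetic condition on integer coordinates by exploiting the duality between the lattice and its reciprocal. Writing an arbitrary lattice vector as $\bfv = \sum_i v_i \bff_i$ with $v_i \in \Z$ and using the duality relation $\bff_i \cdot \bff^j = \delta_i^j$ (the normalization is immaterial for orthogonality), the rationality hypothesis $\bfn = \sum_i n_i \bff^i$ gives $\bfv \cdot \bfn = \sum_i v_i n_i$. Hence $\bfv \perp \bfn$ is equivalent to the single linear Diophantine condition $\sum_i v_i n_i = 0$. Since rescaling $\bfn$ alters neither the plane nor orthogonality, I would first assume $\gcd(n_1,n_2,n_3)=1$.

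Because the proposition fixes $\bfg_3 = \bff_3$, I would seek $\bfg_2$ inside the plane spanned by $\bff_1$ and $\bff_2$, i.e. with vanishing $\bff_3$-component; this renders the orthogonality condition independent of $n_3$ and keeps $\bfg_2$ coordinate-disjoint from $\bfg_3$, which streamlines the basis completion. Concretely, if $(n_1,n_2)\neq(0,0)$, set $d=\gcd(n_1,n_2)$ and write $n_1=dp$, $n_2=dq$ with $\gcd(p,q)=1$; then $\bfg_2 := q\bff_1 - p\bff_2$ satisfies $\bfg_2\cdot\bfn = qn_1 - pn_2 = dpq - dpq = 0$, so $\bfg_2\perp\bfn$. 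In the degenerate case $(n_1,n_2)=(0,0)$, where $\bfn$ is parallel to $\bff^3$, I would simply take $\bfg_2 := \bff_1$, which is orthogonal to $\bfn$ since $\bff_1\cdot\bff^3=0$.

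The remaining task is to produce $\bfg_1$ completing $\{\bfg_1,\bfg_2,\bfg_3\}$ to a basis of the \emph{same} lattice. The crucial observation is that $\bfg_2$ is primitive inside the rank-two sublattice $\Z\bff_1+\Z\bff_2$, since its coordinates $(q,-p)$ (or $(1,0)$ in the degenerate case) are coprime. By B\'ezout there exist integers $s,t$ with $qs+pt=1$, and I would set $\bfg_1 := t\bff_1 + s\bff_2$, so that the $2\times 2$ coordinate matrix of $\{\bfg_1,\bfg_2\}$ has determinant $t(-p)-sq = -(pt+qs)=-1$, making $\{\bfg_1,\bfg_2\}$ a basis of $\Z\bff_1+\Z\bff_2$. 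Adjoining $\bfg_3=\bff_3$, whose third coordinate is the only nonzero entry in the third slot while $\bfg_1,\bfg_2$ have zero third coordinate, the full $3\times 3$ transformation from $\{\bff_1,\bff_2,\bff_3\}$ to $\{\bfg_1,\bfg_2,\bfg_3\}$ is block-triangular with unimodular blocks, hence has determinant $\pm 1$; therefore the two bases generate the identical lattice.

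The calculations are largely routine, so the step I would be most careful about is the primitivity/extendability argument, which is precisely why the $\gcd$ reduction yielding coprime $(p,q)$ is indispensable: the naive choice $n_2\bff_1 - n_1\bff_2$ is perpendicular to $\bfn$ but fails to extend to a lattice basis whenever $\gcd(n_1,n_2)>1$. The separate handling of the degenerate case $\bfn\parallel\bff^3$ is the other place demanding explicit attention, since there the projection onto the $\bff_1$--$\bff_2$ plane carries no information and the coprimality must be arranged by hand.
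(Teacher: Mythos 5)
Your proof is correct and follows essentially the same route as the paper's: fix $\bfg_3=\bff_3$, take $\bfg_2$ in the span of $\bff_1,\bff_2$ with coprime integer coefficients proportional to $(n_2,-n_1)$ so that $\bfg_2\perp\bfn$, and complete the basis via B\'ezout/the extended Euclidean algorithm so the transformation matrix is unimodular. Your explicit treatment of the degenerate case $(n_1,n_2)=(0,0)$ is a small improvement, since the paper's formulas $M_1=-n_2/\gcd(n_2,n_1)$, $M_2=n_1/\gcd(n_2,n_1)$ are undefined there.
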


\begin{proof}
Result 3.1 of \cite{bhatta-book} states that a crystal described by lattice vectors $\{ \bff_1, \bff_2, \bff_3 \}$ can equivalently be described by lattice vectors $\bfg_1, \bfg_2, \bfg_3$ if and only if these satisfy $\bfg_i = \sum_j \mu_i^j \bff_j$ with $\mu_i^j$ being a $3 \times 3$ matrix of integers with determinant $\pm 1$.

We have $\bfg_3=\bff_3$, therefore $\mu_3^1 = \mu_3^2 = 0$ and $\mu_3^3=1$.
Set $\bfg_2 = M_1 \bff_1 + M_2 \bff_2$ with $M_1 = -\frac{n_2}{\gcd(n_2,n_1)}$ and $M_2 = \frac{n1}{\gcd(n_2,n_1)}$.
This choice ensures the following: $\bfg_2 \perp \bfn$, $M_1$ and $M_2$ are integers, and $\gcd(M_1, M_2)=1$.
It implies that $\mu_2^1 = M_1, \mu_2^2 = M_2, \mu_2^3=0$.

The remaining step to complete the proof is to show that we can find $\bfg_1$ which satisfies $\bfg_i = \sum_j \mu_i^j \bff_j$ where $\mu_i^j$ has the restrictions mentioned above.
Write $\bfg_1 = N_1 \bff_1 + N_2 \bff_2 + N_3 \bff_3$.
Then $\det \mu = 1$ gives us the condition $N_1 M_2 - M_1 N_2 = 1$.
The extended Euclidean algorithm provides the existence of integer solutions $N_1$ and $N_2$ to this equation \cite{euclidean-algorithm}.
In general, this algorithm provides integer solutions $m,n$ to the equation $a m + b n = \gcd(a,b)$ where $a, b$ are given integers.
While the algorithm is constructive and ensures existence of solutions, it does not provide explicit forms for the solutions.

Therefore, there exists $\bfg_1 = N_1 \bff_1 + N_2 \bff_2 + N_3 \bff_3$ where $N_1, N_2$ are obtained from the algorithm above and $N_3$ can be arbitrary.
\end{proof}

We wish to show that given a crystal described by lattice vectors $\{ \bff_1, \bff_2, \bff_3 \}$, we can equivalently describe the crystal by lattice vectors $\{ \bfh_1, \bfh_2, \bfh_3 \}$ with $\bfh_2 \perp \bfn$ and $\bfh_3 \perp \bfn$.
The vector $\bfn$ is the unit normal to a rational plane.

We simply apply the proposition above twice in succession.
First, we use the proposition to transform from  $\{ \bff_1, \bff_2, \bff_3 \}$ to  $\{ \bfg_1, \bfg_2 \perp \bfn, \bfg_3 = \bff_3 \}$.
We then use the proposition again to transform from $\{ \bfg_1, \bfg_2, \bfg_3 \}$ to $\{ \bfh_1, \bfh_2 = \bfg_2, \bfh_3 \perp \bfn\}$.
Therefore, the final description has $\{ \bfh_1, \bfh_2 \perp \bfn, \bfh_3 \perp \bfn \}$.
The condition $\det \mu = \pm 1$ ensures linear independence; in fact, it preserves the volume of the unit cell.


\begin{figure}[h!]
	\centering
	\includegraphics[width=160mm]{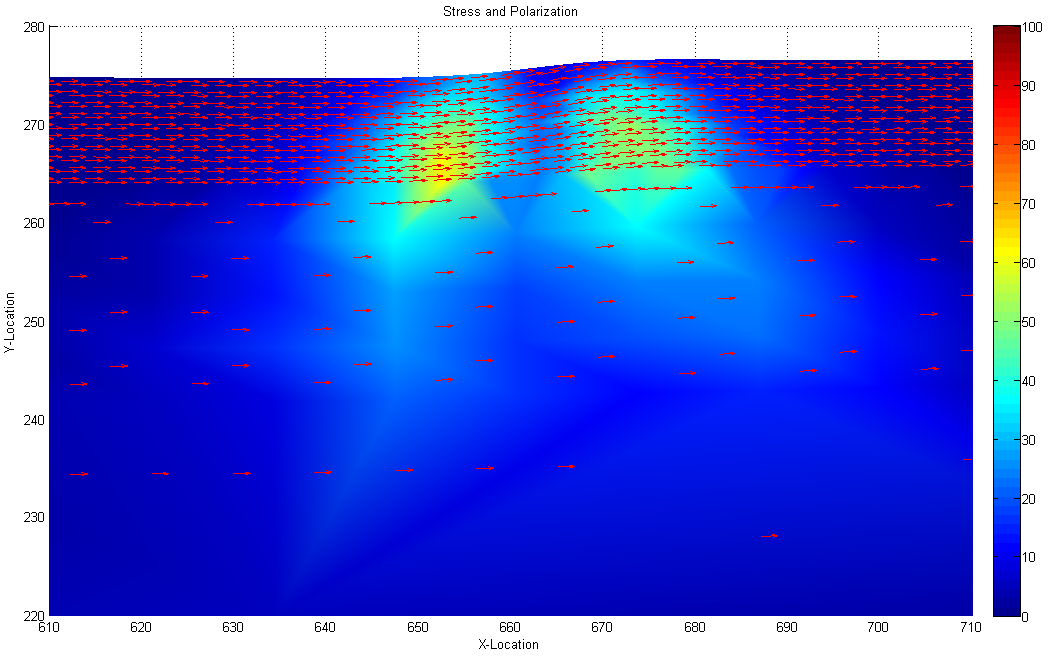}
	\caption{\small Stress and polarization due to single point charge with an electrostatic scaling of $1$.}
	\label{fig:scaling-1}
\end{figure}
%
%

\begin{figure}[h!]
	\centering
	\includegraphics[width=160mm]{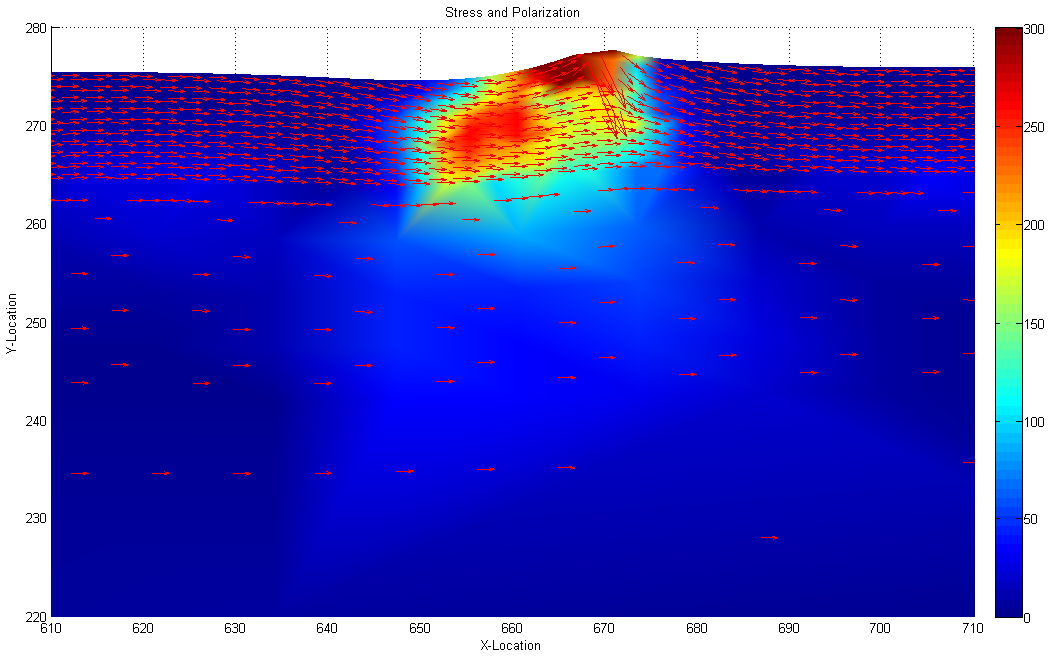}
	\caption{\small Stress and polarization due to single point charge with an electrostatic scaling of $4$.}
	\label{fig:scaling-4}
\end{figure}

%

\begin{figure}[h!]
	\centering
	\includegraphics[width=160mm]{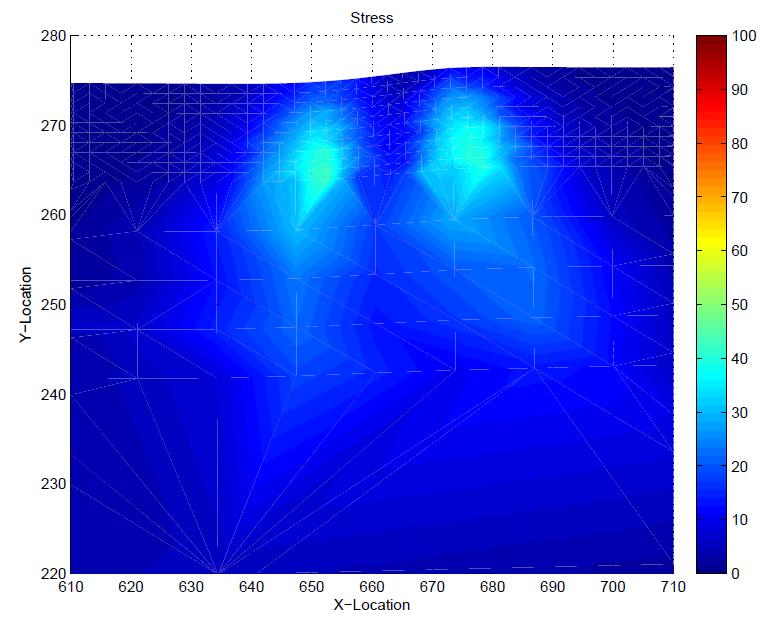}
	\caption{\small Stress due to two point charges with an electrostatic scaling of $1$.}
	\label{fig:stress-2chargefar}
\end{figure}

\begin{figure}[h!]
	\centering
	\includegraphics[width=130mm]{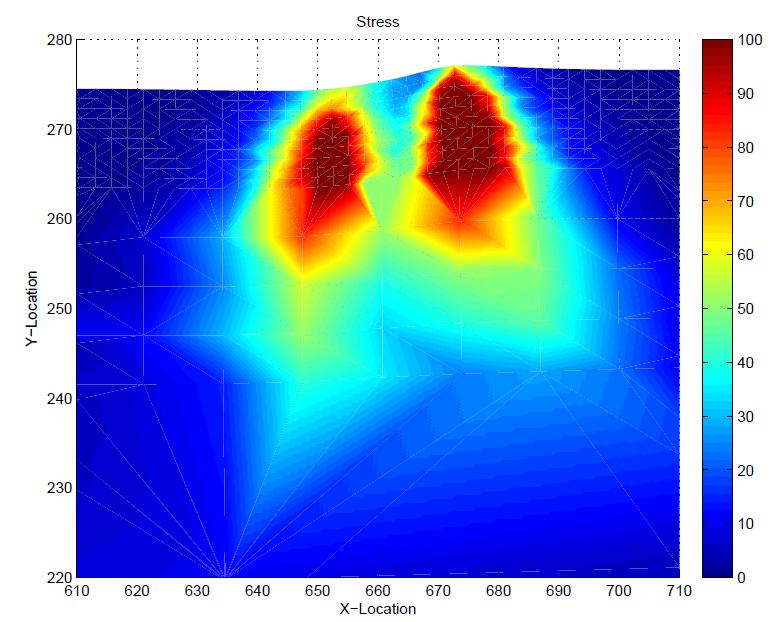}
	\caption{\small Stress due to two closely-spaced point charges with an electrostatic scaling of $1$.}
	\label{fig:stress-2chargeclose}
\end{figure}

\begin{figure}[h!]
	\centering
	\includegraphics[width=130mm]{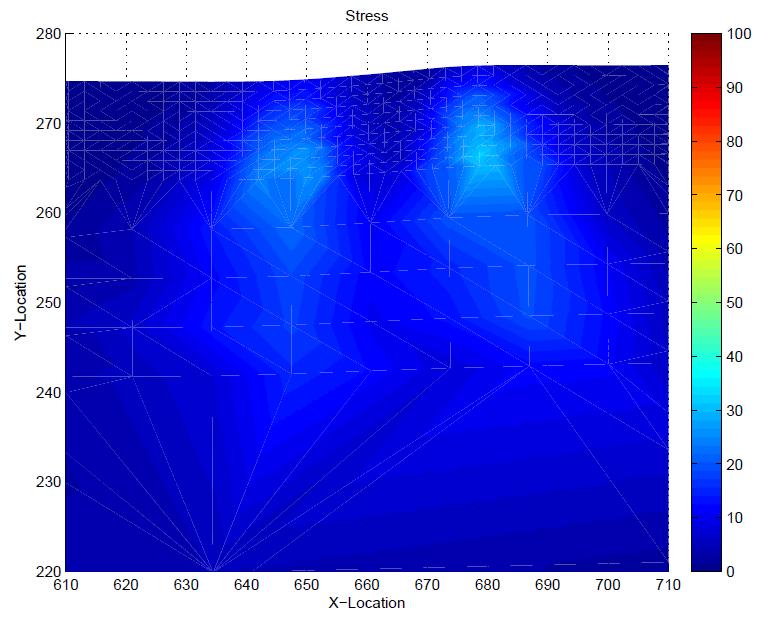}
	\caption{\small Stress due to four point charges with an electrostatic scaling of $1$.}
	\label{fig:stress-4chargefar}
\end{figure}

\begin{figure}[h!]
	\centering
	\includegraphics[width=130mm]{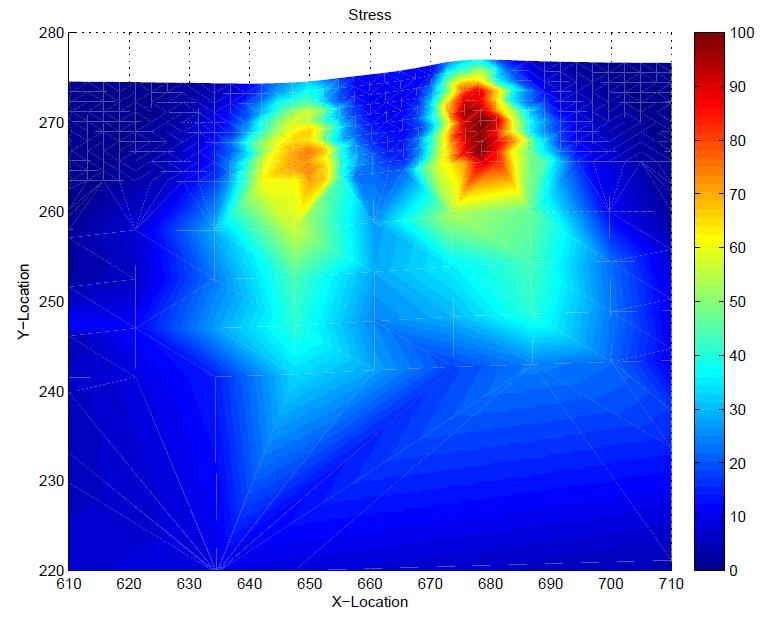}
	\caption{\small Stress due to four closely-spaced point charges with an electrostatic scaling of $1$.}
	\label{fig:stress-4chargeclose}
\end{figure}

\begin{figure}[h!]
	\centering
	\includegraphics[width=130mm]{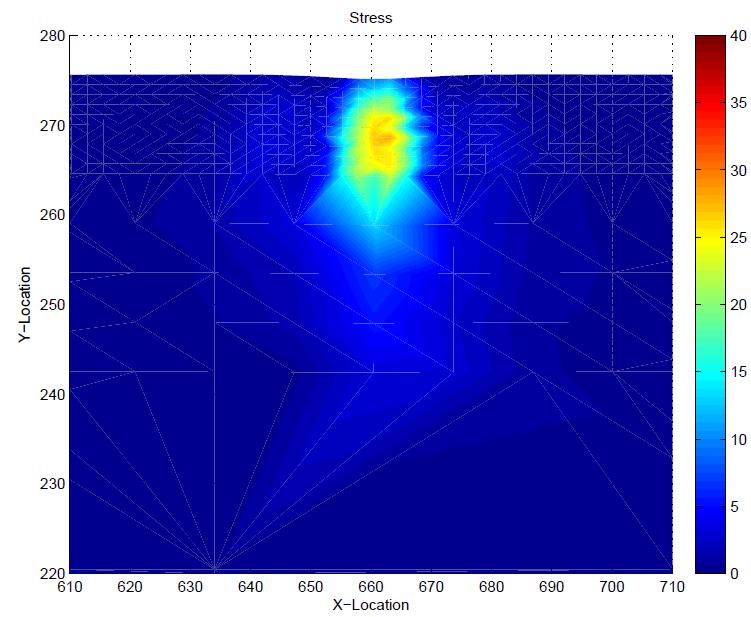}
	\caption{\small Stress due to a dipole (two point charges of opposite sign) with an electrostatic scaling of $1$.}
	\label{fig:stress-dipole}
\end{figure}

\begin{figure}[h!]
	\centering
	\includegraphics[width=130mm]{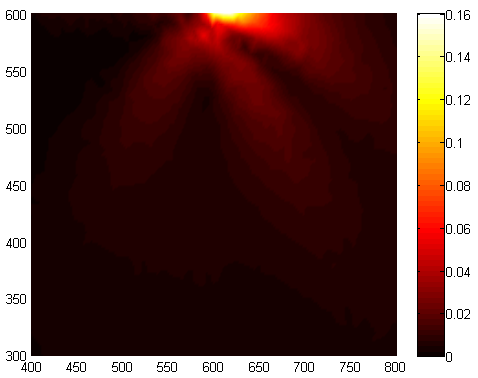}
	\caption{\small Stress due to a point charge computed using a continuum phase-field model \cite{Lun}.}
	\label{fig:Lun}
\end{figure}

\begin{figure}[h!]
	\centering
	\includegraphics[width=130mm]{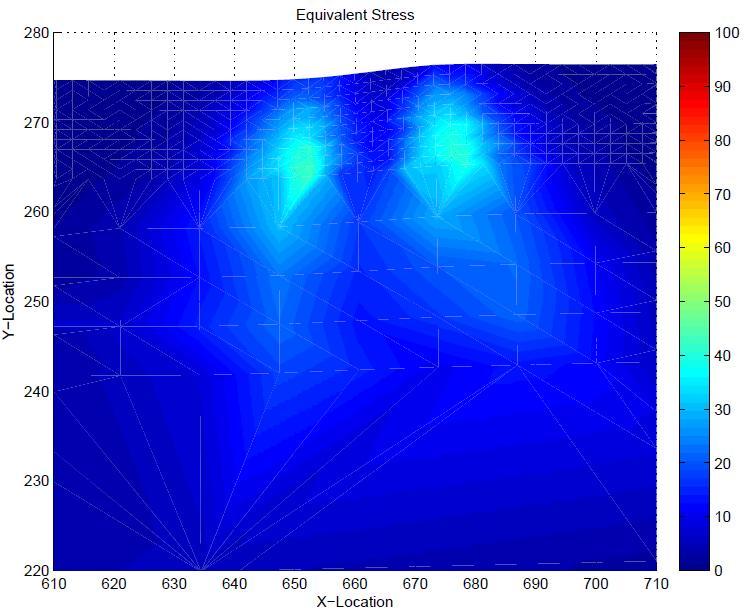}
	\caption{\small Stress due to a point charge with electrostatic fields computed in the reference configuration.}
	\label{fig:finite_elect}
\end{figure}

\begin{figure}[h!]
	\centering
	\includegraphics[width=130mm]{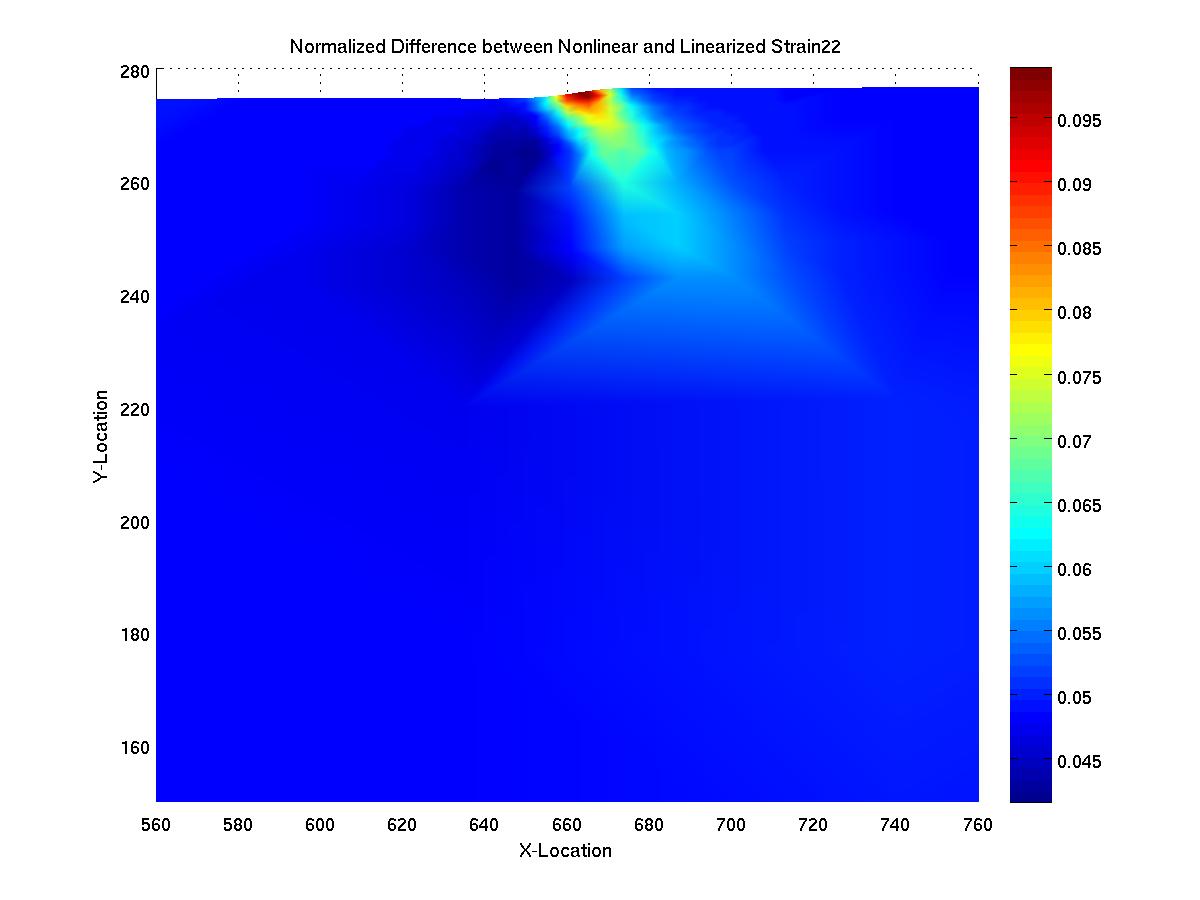}
	\caption{\small Normalized Difference between Nonlinear and Linearized Strain Measure ($\epsilon_{22}$ component).}
	\label{fig:normal_strain22}
\end{figure}

\begin{figure}[h!]
	\centering
	\includegraphics[width=130mm]{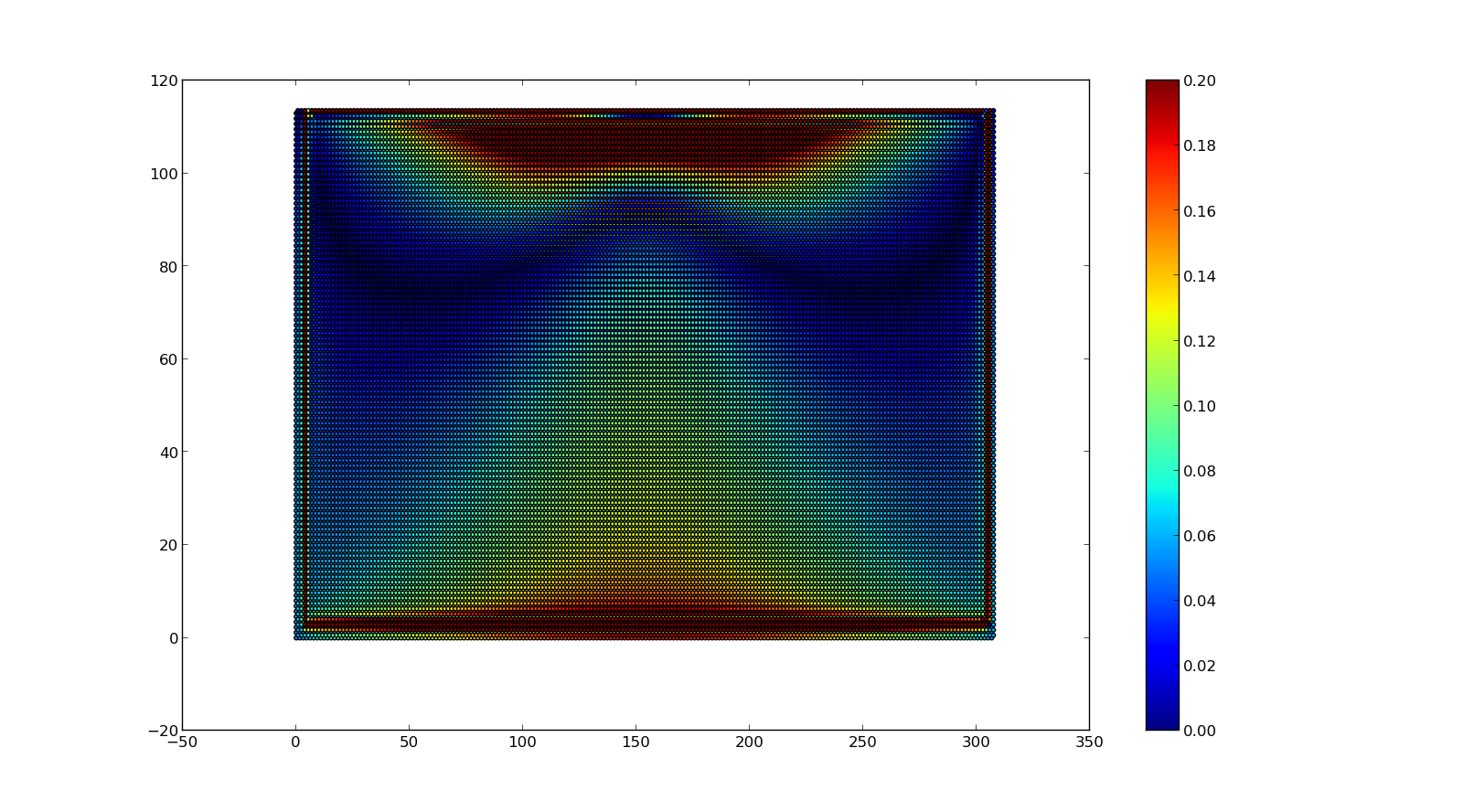}
	\caption{\small Fully atomic calculation for a sample with a point charge.  The plot shows the energy difference from the state without any external charge with the electrostatic field energy subtracted.  The entire computational domain is shown.}
	\label{fig:full-atomic}
\end{figure}

\begin{figure}[h!]
	\centering
	\includegraphics[width=130mm]{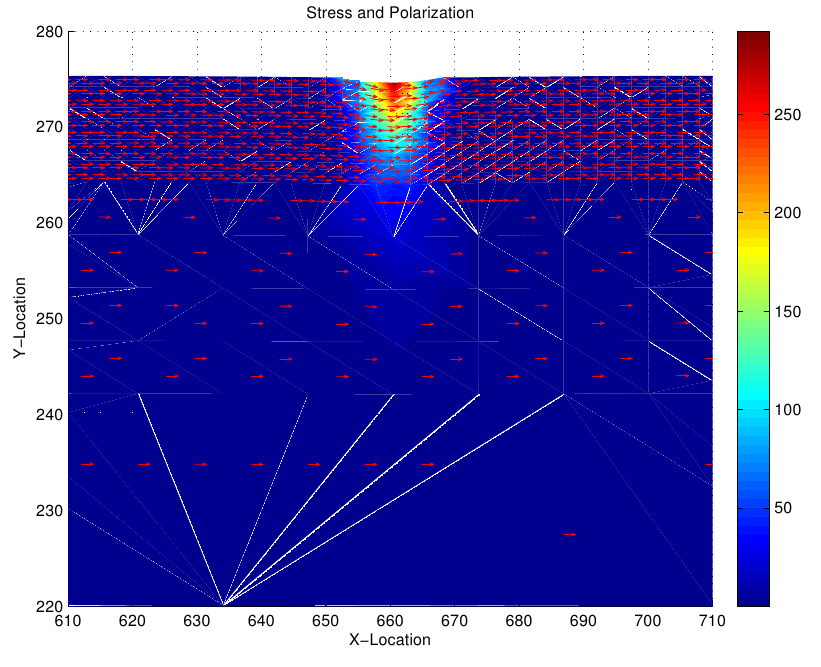}
	\caption{\small Stress and polarization due to mechanical indentation with no applied electric field.}
	\label{fig:stress-indent}
\end{figure}


\addcontentsline{toc}{section}{References}
\bibliography{mybib}

\providecommand{\bysame}{\leavevmode\hbox to3em{\hrulefill}\thinspace}
\providecommand{\MR}{\relax\ifhmode\unskip\space\fi MR }
\providecommand{\MRhref}[2]{%
  \href{http://www.ams.org/mathscinet-getitem?mr=#1}{#2}
}
\providecommand{\href}[2]{#2}
\begin{thebibliography}{TTDDB12}

\bibitem[All92]{allaire-2scale}
Gr{\'e}goire Allaire, \emph{Homogenization and two-scale convergence}, SIAM
  Journal on Mathematical Analysis \textbf{23} (1992), no.~6, 1482--1518.

\bibitem[BG97]{beatson-greengard-FMM-review}
Rick Beatson and Leslie Greengard, \emph{A short course on fast multipole
  methods}, Wavelets, multilevel methods and elliptic PDEs (1997), 1--37.

\bibitem[Bha03]{bhatta-book}
Kaushik Bhattacharya, \emph{Microstructure and martensite: Why it forms and how
  it gives rise to the shape-memory effect}, vol.~2, Oxford University Press,
  2003.

\bibitem[BLBL02]{blanc-lebris-lions}
Xavier Blanc, Claude Le~Bris, and P-L Lions, \emph{From molecular models to
  continuum mechanics}, Archive for Rational Mechanics and Analysis
  \textbf{164} (2002), no.~4, 341--381.

\bibitem[CE11]{cances-ehrlacher}
Eric Canc{\`e}s and Virginie Ehrlacher, \emph{Local defects are always neutral
  in the thomas--fermi--von weisz{\"a}cker theory of crystals}, Archive for
  rational mechanics and analysis \textbf{202} (2011), no.~3, 933--973.

\bibitem[DO05]{dorfmann-ogden}
A~Dorfmann and RW~Ogden, \emph{Nonlinear electroelasticity}, Acta Mechanica
  \textbf{174} (2005), no.~3-4, 167--183.

\bibitem[Eri62]{ericksen-nilpotent}
JL~Ericksen, \emph{Nilpotent energies in liquid crystal theory}, Archive for
  Rational Mechanics and Analysis \textbf{10} (1962), no.~1, 189--196.

\bibitem[FJ00]{friesecke-james}
Gero Friesecke and Richard~D James, \emph{A scheme for the passage from atomic
  to continuum theory for thin films, nanotubes and nanorods}, Journal of the
  Mechanics and Physics of Solids \textbf{48} (2000), no.~6, 1519--1540.

\bibitem[GBO07]{gavini-ortiz-bhatta}
Vikram Gavini, Kaushik Bhattacharya, and Michael Ortiz, \emph{Quasi-continuum
  orbital-free density-functional theory: A route to multi-million atom
  non-periodic dft calculation}, Journal of the Mechanics and Physics of Solids
  \textbf{55} (2007), no.~4, 697--718.

\bibitem[GKZ07]{griebel-MD-book}
Michael Griebel, Stephan Knapek, and Gerhard Zumbusch, \emph{Numerical
  simulation in molecular dynamics}, Springer Berlin, 2007.

\bibitem[GR87]{greengard1987}
Leslie Greengard and Vladimir Rokhlin, \emph{A fast algorithm for particle
  simulations}, Journal of computational physics \textbf{73} (1987), no.~2,
  325--348.

\bibitem[HA08]{Hildebrand}
Felix~E. Hildebrand and Rohan Abeyaratne, \emph{An atomistic investigation of
  the kinetics of detwinning}, Journal of the Mechanics and Physics of Solids
  \textbf{56} (2008), no.~4, 1296 -- 1319.

\bibitem[JM94]{james-muller}
R.~D. James and Stefan M\"{u}ller, \emph{Internal variables and fine-scale
  oscillations in micromagnetics}, Continuum Mechanics and Thermodynamics
  \textbf{6} (1994), 291--336, 10.1007/BF01140633.

\bibitem[KKO08]{kulkarni}
Yashashree Kulkarni, Jaroslaw Knap, and Michael Ortiz, \emph{A variational
  approach to coarse graining of equilibrium and non-equilibrium atomistic
  description at finite temperature}, Journal of the Mechanics and Physics of
  Solids \textbf{56} (2008), no.~4, 1417--1449.

\bibitem[KM96]{kittel-book}
Charles Kittel and Paul McEuen, \emph{Introduction to solid state physics},
  vol.~7, Wiley New York, 1996.

\bibitem[KM12]{DFT-vdW-crap}
Ji{\v{r}}{\'\i} Klime{\v{s}} and Angelos Michaelides, \emph{Perspective:
  Advances and challenges in treating van der waals dispersion forces in
  density functional theory}, The Journal of Chemical Physics \textbf{137}
  (2012), 120901.

\bibitem[KMA04]{solid-electrolytes}
VV~Kharton, FMB Marques, and A~Atkinson, \emph{Transport properties of solid
  oxide electrolyte ceramics: a brief review}, Solid State Ionics \textbf{174}
  (2004), no.~1, 135--149.

\bibitem[KO01]{knap-ortiz}
J~Knap and M~Ortiz, \emph{An analysis of the quasicontinuum method}, Journal of
  the Mechanics and Physics of Solids \textbf{49} (2001), no.~9, 1899--1923.

\bibitem[KS08]{kohn-shipman}
Robert~V Kohn and Stephen~P Shipman, \emph{Magnetism and homogenization of
  microresonators}, Multiscale Modeling \& Simulation \textbf{7} (2008), no.~1,
  62--92.

\bibitem[LRSC01]{euclidean-algorithm}
Charles~E Leiserson, Ronald~L Rivest, Clifford Stein, and Thomas~H Cormen,
  \emph{Introduction to algorithms}, The MIT press, 2001.

\bibitem[MSC08]{pradeep-flexo}
MS~Majdoub, P~Sharma, and T~Cagin, \emph{Enhanced size-dependent
  piezoelectricity and elasticity in nanostructures due to the flexoelectric
  effect}, Physical Review B \textbf{77} (2008), no.~12, 125424.

\bibitem[MT09]{QC-review2}
Ronald~E Miller and EB~Tadmor, \emph{A unified framework and performance
  benchmark of fourteen multiscale atomistic/continuum coupling methods},
  Modelling and Simulation in Materials Science and Engineering \textbf{17}
  (2009), no.~5, 053001.

\bibitem[PB]{puri-bhatta}
Saurabh Puri and Kaushik Bhattacharya, \emph{Definition of the atomic-level
  polarization}, unpublished.

\bibitem[PCS12]{ponte-siboni}
P~Ponte~Casta{\~n}eda and MH~Siboni, \emph{A finite-strain constitutive theory
  for electro-active polymer composites via homogenization}, International
  Journal of Non-Linear Mechanics \textbf{47} (2012), no.~2, 293--306.

\bibitem[PSA07]{core-shell-ref}
Simon~R Phillpot, Susan~B Sinnott, and Aravind Asthagiri, \emph{Atomic-level
  simulation of ferroelectricity in oxides: Current status and opportunities},
  Annu. Rev. Mater. Res. \textbf{37} (2007), 239--270.

\bibitem[Ros12]{rosakis-surface-energy}
Phoebus Rosakis, \emph{Continuum surface energy from a lattice model}, arXiv
  preprint arXiv:1201.0712 (2012).

\bibitem[RV07]{Resta-Vanderbilt}
Raffaele Resta and David Vanderbilt, \emph{Theory of polarization: A modern
  approach}, Physics of Ferroelectrics, Topics in Applied Physics, vol. 105,
  Springer Berlin / Heidelberg, 2007, pp.~31--68.

\bibitem[SB01]{shu-bhatta}
YC~Shu and K~Bhattacharya, \emph{Domain patterns and macroscopic behaviour of
  ferroelectric materials}, Philosophical magazine b \textbf{81} (2001),
  no.~12, 2021--2054.

\bibitem[Sco00]{ferroelectrics-scott-book}
James~F Scott, \emph{Ferroelectric memories}, vol.~3, Springer, 2000.

\bibitem[SS09]{schlomerkemper-schmidt}
Anja Schl{\"o}merkemper and Bernd Schmidt, \emph{Discrete-to-continuum limit of
  magnetic forces: Dependence on the distance between bodies}, Archive for
  rational mechanics and analysis \textbf{192} (2009), no.~3, 589--611.

\bibitem[Tan07]{tang-book}
Kwong-Tin Tang, \emph{Mathematical methods for engineers and scientists 2:
  Vector analysis, ordinary differential equations and laplace transforms},
  vol.~3, Springer, 2007.

\bibitem[TM05]{QC-review1}
EB~Tadmor and RE~Miller, \emph{The theory and implementation of the
  quasicontinuum method}, Handbook of Materials Modeling, Springer, 2005,
  pp.~663--682.

\bibitem[Tou56]{toupin-elastic-dielectric}
Richard~A Toupin, \emph{The elastic dielectric}, J. Rational Mech. Anal
  \textbf{5} (1956), no.~6, 849--915.

\bibitem[TSBK99]{Tadmor}
E.~B. Tadmor, G.~S. Smith, N.~Bernstein, and E.~Kaxiras, \emph{Mixed finite
  element and atomistic formulation for complex crystals}, Phys. Rev. B
  \textbf{59} (1999), no.~1, 235--245.

\bibitem[TTDDB12]{debotton-bhattacharya}
L~Tian, L~Tevet-Deree, G~DeBotton, and K~Bhattacharya, \emph{Dielectric
  elastomer composites}, Journal of the Mechanics and Physics of Solids
  \textbf{60} (2012), no.~1, 181--198.

\bibitem[Tuc08]{Tuckerman-MD-book}
Mark Tuckerman, \emph{Statistical mechanics: Theory and molecular simulations},
  Oxford University Press, UK, 2008.

\bibitem[TWSK02]{tadmor-EffHamil}
EB~Tadmor, UV~Waghmare, GS~Smith, and E~Kaxiras, \emph{Polarization switching
  in pbtio3: an ab initio finite element simulation}, Acta Materialia
  \textbf{50} (2002), no.~11, 2989--3002.

\bibitem[XB08]{xiao-bhatta}
Yu~Xiao and Kaushik Bhattacharya, \emph{A continuum theory of deformable,
  semiconducting ferroelectrics}, Archive for Rational Mechanics and Analysis
  \textbf{189} (2008), no.~1, 59--95.

\bibitem[Xia04]{Xiao_thesis}
Yu~Xiao, \emph{The influence of oxygen vacancies on domain patterns in
  ferroelectric perovskites}, Ph.D. thesis, California Institute of Technology,
  2004.

\bibitem[YD12]{Lun}
Lun Yang and Kaushik Dayal, \emph{Free surface domain nucleation in a
  ferroelectric under an electrically charged tip}, Journal of Applied Physics
  \textbf{111} (2012), no.~1, 014106--014106.

\bibitem[ZB05]{zhang-bhatta}
W~Zhang and K~Bhattacharya, \emph{A computational model of ferroelectric
  domains. part i: model formulation and domain switching}, Acta materialia
  \textbf{53} (2005), no.~1, 185--198.

\end{thebibliography}
\bibliographystyle{amsalpha}

\end{document}